\documentclass[11pt]{article}
\usepackage{authblk}
\usepackage{hehy}
\newcommand{\x}{{\bm{x}}}
\newcommand{\y}{{\bm{y}}}
\newcommand{\z}{{\bm{z}}}

\newcommand{\R}{{\mathbb{R}}}

\newcommand{\Exp}{{\ensuremath{\mathbb{E}}}}
\DeclareMathOperator*{\argmin}{arg\,min}

\newcommand{\MAXKSAT}{{\text{MAX}-\textit{k}-\text{SAT}}}
\newcommand{\MAXtwoSAT}{{\text{MAX}-\textit{2}-\text{SAT}}}

\newcommand{\MAXSAT}{{\text{MAX}-\text{SAT}}}
\newcommand{\MAXKCSP}{{\text{MAX}-\textit{k}-\text{CSP}}}
\newcommand{\MAXSNP}{{\text{MAX}-\text{SNP}}}

\newcommand{\MAXCUT}{{\text{MAX}-\text{CUT}}}
\newcommand{\MAXDICUT}{{\text{MAX}-\text{DICUT}}}
\newcommand{\MAXHYPERCUT}{{\text{MAX}-\text{HYPERCUT}}}

\newcommand{\OPT}{{\text{OPT}}}
\newcommand{\SAT}{{\text{SAT}}}
\renewcommand{\P}{{\text{P}}}
\newcommand{\NP}{{\text{NP}}}
\newcommand{\APX}{{\text{APX}}}
\newcommand{\NPhard}{{\text{NP}-hard}}
\newcommand{\NPcomplete}{{\text{NP}-complete}}

\usepackage{geometry}
\usepackage{enumitem}
\usepackage{libertinus}
\title{Learning-augmented smooth integer programs with PAC-learnable oracles}
\author[1,2]{Hao-Yuan He}
\author[1,2]{Ming Li}
\affil[1]{National Key Laboratory for Novel Software Technology, Nanjing University}
\affil[2]{School of Artificial Intelligence, Nanjing University}
\affil[  ]{\{hehy,\,lim\}@lamda.nju.edu.cn}
\date{Preprint, \today}

\begin{document}
\maketitle
\begin{abstract}
This paper investigates learning-augmented algorithms for \emph{smooth integer programs}, covering canonical problems such as \MAXCUT{} and \MAXKSAT{}. 
We introduce a framework that incorporates a predictive oracle to construct a linear surrogate of the objective, which is then solved via linear programming followed by a rounding procedure.
Crucially, our framework ensures that the solution quality is both \emph{consistent} and \emph{smooth} against prediction errors.
We demonstrate that this approach effectively extends tractable approximations from the classical \emph{dense} regime to the \emph{near-dense} regime.
Furthermore, we go beyond the assumption of oracle existence by establishing its \emph{PAC-learnability}. We prove that the induced algorithm class possesses a bounded pseudo-dimension, thereby ensuring that an oracle with near-optimal expected performance can be learned with polynomial samples.
\end{abstract}

\clearpage
\section{Introduction}\label{sec:intro}
Combinatorial optimization stands as a cornerstone of computer science, operations research, and numerous scientific disciplines.
A central objective within this domain is the maximization of a function over a discrete feasible set, a formulation that encapsulates canonical challenges such as \MAXCUT{} and \MAXKSAT{}.
However, the inherent computational intractability, i.e., \NPhard{}, of these problems typically precludes the existence of efficient exact algorithms.
While \emph{approximation algorithms} offer an alternative by trading optimality for polynomial-time solvability, many fundamental problems are \APX-hard~\cite{Papadimitriou1991APX} (e.g., \MAXKSAT{}), implying that they do not admit a polynomial-time approximation scheme (PTAS) unless \(\P = \NP\)~\cite{Hastad2001Inapproximability}.

\emph{Learning-augmented algorithms}, or algorithms with predictions, recently have emerged as a promising paradigm~\cite{Lykouris2021LearningAugmented, mitzenmacher2022algorithms}.
The core observation is that practical instances rarely exhibit the pathological structures found in worst-case scenarios; instead, they typically adhere to underlying distributions or recurring patterns.
By utilizing historical data, one can leverage a predictive model—an \emph{oracle}—to guide algorithmic decisions.
This data-driven perspective has yielded provable performance gains in domains ranging from caching~\cite{Lykouris2021LearningAugmented} and scheduling~\cite{Kumar2018ImprovingOnlineAlgorithms} to routing~\cite{Bampis2022CTP, Bampis2023TSP}, enabling algorithms to surpass classical worst-case bounds while retaining robustness against prediction errors.

In this work, we consider the \emph{smooth integer programs}, i.e., maximization of an \(n\)-variate degree-\(d\) polynomial \(p(\x)\) subject to binary constraints:
\begin{equation}\tag{d-IP}\label{eqn:d-IP-intro}
  \begin{array}{cl}
\displaystyle    
    \max_{\x} & p(\x)\\
    \text{s.t.} & \x \in \{0,1\}^n.
  \end{array}
\end{equation}

This problem was first investigated by \citet{Arora1999PTAS} and is of fundamental importance: it encompasses the entire class of \MAXSNP{} problems~\cite{Papadimitriou1991APX}. This class constitutes a substantial subset of \NPhard{} problems, including canonical graph partitioning tasks (e.g., \MAXCUT{} and \MAXHYPERCUT{}), as well as general Boolean constraint satisfaction problems (e.g., \MAXKSAT{} and \MAXKCSP{}), which are known to be \APX-hard.

\paragraph{Prior Works}
\citet{Arora1999PTAS} developed a PTAS for smooth integer programs in the \emph{dense} regime, where the optimal value of \cref{eqn:d-IP-intro} is \(\Omega(n^d)\). Their approach employs \emph{exhaustive sampling}: for a given precision \(\epsilon\), the algorithm selects a random subset of \(O(\log n / \epsilon^2)\) variables.
It then enumerates all possible assignments for this subset—incurring a time complexity of \(2^{O({\log n}/{\epsilon^2})} = n^{O(1/\epsilon^2)}\)—and determines the remaining variables via linear programming followed by randomized rounding.
They demonstrated that this approach guarantees an approximation gap of \(O(n^{d}\epsilon)\).
Recently, in the context of learning-augmented algorithms, \citet{bampis24PLA} introduced a \emph{parsimonious} oracle that predicts the assignments of the \(O(\log n/\epsilon^3)\) sampled variables, thereby improving algorithmic efficiency while guaranteeing an approximation gap of \(O(n^{d} (\epsilon + \varepsilon))\), where \(\varepsilon\) denotes the prediction error of the oracle.

We extend the setting of \citet{bampis24PLA} by transitioning from a \emph{parsimonious} oracle to a \emph{full information} oracle.
The rationale is intuitive: whereas the parsimonious oracle predicts only a subset of variables to align with the seminal PTAS approach, a full-information oracle predicting all \(n\) variables offers richer guidance and more natural in machine learning deployment.

Technically, for smooth integer programs, the objective \(p(\x)\) exhibits \emph{\(\beta\)-smoothness}~(see \cref{def:smooth-polynomial}), which is central to our framework and offers two primary advantages:

\begin{itemize}[leftmargin=*]
  \item 
  \emph{Robust linearization} (see \cref{thm:relaxation-gap}): We leverage the \(\beta\)-smoothness to construct a linear surrogate that locally approximates the polynomial objective around the oracle prediction \(\hat{\x}\). Specifically, we decompose the objective into \(p(\x) \approx c + \sum_{i\in[n]} x_i p_i(\hat{\x})\). We establish that the approximation error is strictly bounded by a term dependent on \(\beta\) and the prediction quality. This facilitates a linear programming relaxation of \cref{eqn:d-IP-intro} whose optimal value provably tracks that of the original problem.
  \item
  \emph{Lossless rounding} (Lemma 3.3 in \citealt{Arora1999PTAS}): The smoothness of the objective further guarantees that an optimal fractional solution of the relaxed linear program can be rounded to an integral solution with negligible degradation in the objective value. Notably, for multilinear objectives, a greedy deterministic rounding strategy can be employed to obtain an integral solution with a non-decreasing objective value~(see \cref{thm:deterministic-rounding-monotone}).
\end{itemize}
  
Based on these properties, our framework comprises three stages: (1) given an instance of problem \(\pi\), querying the oracle for a solution \(f(\pi) = \hat{\x}\); (2) solving a linear program relaxation obtained by linearizing the objective around \(\hat{\x}\); and (3) rounding the fractional solution \(\y\) to an integral output \(\z\).

\subsection{Our Contributions}

Our analysis shows that the algorithm's performance is tightly coupled to the quality of the oracle.
Let \(\x^*\) denote an optimal solution to \cref{eqn:d-IP-intro} and define the prediction error as \(\varepsilon = \|\hat{\x} - \x^*\|_1\). We prove that, with high probability:
\begin{equation*}
  p(\z) \ge p(\x^*) - O(n^{d - 1/2} \sqrt{\varepsilon}) - \tilde{O}(n^{d-1/2}).
\end{equation*}
The first error term, \(O(n^{d - 1/2} \sqrt{\varepsilon})\), quantifies the degradation attributable to prediction inaccuracy, while the second term, \(\tilde{O}(n^{d-1/2})\), represents the irreducible error arising from the rounding procedure.
Significantly, in the case of multilinear objectives, this rounding error can be eliminated via a deterministic greedy rounding strategy~(see \cref{thm:deterministic-rounding-monotone}).

By utilizing a full-information oracle, our guarantee broadens the applicability of the framework from the \emph{dense} regime of \citet{Arora1999PTAS,bampis24PLA} to the \emph{near-dense} regime, where the optimum scales as \(\Omega(n^{d-1/2+\xi})\) for some \(\xi \in (0,\tfrac{1}{2}]\).
In this regime, the approximation ratio of our algorithm is \(1 - \tilde{O}(\sqrt{\varepsilon}/n^\xi)\).
Consequently, our algorithm is both \emph{consistent}, converging to optimality as the prediction error vanishes, and \emph{smooth}, degrading gracefully as the error increases; thus, it offers substantial improvements over worst-case bounds even when predictions are imperfect.
Furthermore, by parallelizing with a state-of-the-art approximation algorithm and pick the better one, we readily equip \emph{robustness} against oracle errors.

A critical yet often overlooked aspect of learning-augmented design is the \emph{learnability} of the oracle.
Existing works often assume an oracle exists, whereas we establish its learnability.
Leveraging the PAC-learning framework for algorithm selection~\cite{Gupta2017PACApproach}, we analyze the sample complexity required to train our oracle.
We demonstrate that if the hypothesis class of predictors \(\mathcal{F}\) has bounded VC-dimension coordinate-wise, the induced algorithm class possesses finite pseudo-dimension.
This implies that an empirical risk minimization (ERM) procedure can learn an oracle with \emph{near-optimal expected performance} under polynomial sample complexity.

To summarize, our contributions are as follows:
\begin{enumerate}[leftmargin=*]
  \item We present a learning-augmented framework for smooth integer programs with \emph{tighter theoretical guarantees}~(see \cref{thm:general-optimization-gap}). Our approach linearizes the objective around a predictive oracle and then applies randomized rounding, which enables us to handle a broad class of \MAXSNP{} problems. Moreover, by leveraging a full-information oracle, we eliminate the sampling phase of \citet{Arora1999PTAS,bampis24PLA} and obtain an additive error of $\tilde{O}(n^{d-1/2})$, improving upon the $O(n^d)$ error of prior work and extending the valid regime from dense to near-dense instances.
  \item We establish the \emph{learnability of the oracle} within the PAC framework~(see \cref{thm:erm-learnability}). Specifically, we show that the induced algorithm class has finite pseudo-dimension, which implies that the oracle is PAC-learnable with polynomial sample complexity.
\end{enumerate}

\paragraph{Organization}
We present our learning-augmented framework in \cref{sec:approach}.
In \cref{sec:learnability}, we establish the learnability of the oracle within the PAC framework.
In \cref{sec:applications}, we illustrate applications of our framework to \MAXCUT{} and \MAXKSAT{} as examples.
In \cref{sec:conclusion}, we offer concluding remarks.
Supplementary materials are provided in the appendix, including an extended discussion of related work, technical details, and an extension of our framework to \cref{eqn:d-IP-intro} with degree-\(d\) polynomial constraints.

\section{Approach}\label{sec:approach}

This section presents a learning-augmented approximation framework for Boolean integer programming with smooth polynomial objectives~\cite{Arora1999PTAS}. We begin by introducing the preliminaries, followed by a description of the framework's mechanism and theoretical guarantees, initially for the quadratic case and subsequently generalized to higher orders.

\subsection{Preliminaries}
Let \([n] = \{1, \dots, n\}\). We denote the closed interval \([a - b, a + b]\) by \(a \pm b\). Throughout this paper, unless otherwise specified, \(\x \in \{0,1\}^n\) represents an \(n\)-dimensional Boolean vector, and \(e\) denotes the base of the natural logarithm.

\begin{definition}[$\beta$-smooth]\label{def:smooth-polynomial}
  An $n$-variate polynomial \(p(\x)\) of degree $d$ is \emph{$\beta$-smooth} if there exists a constant \(\beta > 0\) such that for every \( 0 \leq l \leq d\), the absolute value of the coefficient for any degree-\(l\) monomial in the expansion of \(p(\x)\) is bounded by \(\beta n^{d-l}\).
\end{definition}

A wide array of fundamental optimization problems can be formulated as optimizing smooth polynomial objectives. We provide two illustrative examples below.

\begin{example}\label{example: maxcut}
  Given an undirected graph \(G=(V,E)\), \MAXCUT{} partitions \(V\) to maximize the number of crossing edges. We define the objective function over \(\x \in \{0,1\}^n\), with \(x_i = 1\) denoting that vertex \(i\) is in the subset, as
  \(
    p(\x)=\sum_{i\in [n]} x_i  \left(\sum_{(i,j)\in E} (1 - x_j)\right),
  \)
  which is \(2\)-smooth.
\end{example}

\begin{example}\label{example: maxksat} 
  Given a formula in conjunctive normal form where each clause has exactly \(k\) literals, \MAXKSAT{} maximizes the number of satisfied clauses. The objective function is
  \(
    p(\x) = \sum_{C \in \mathcal{C}}  \left(1 - \prod_{i \in C^{-}} x_i \prod_{i \in C^{+}} (1 - x_i)  \right)\,,
  \)
  where \(C^+\) and \(C^-\) denote indices of positive and negative literals in clause \(C\). This degree-\(k\) polynomial is \(4^k\)-smooth. 
\end{example}

Indeed, the entire \MAXSNP{} class can be formulated in the form of \cref{eqn:d-IP-intro} and exhibits \(\beta\)-smoothness by employing a formulation analogous to that of \MAXKSAT{}; we defer these details to \cref{sec:maxcsp-smoothness} in appendix.
    
\begin{definition}[Problem Regimes]\label{def:denseness}
  Consider a discrete optimization problem with a degree-\(d\) polynomial objective. The instance is termed \emph{dense} if its optimal objective value is \(\Omega(n^d)\)~\citep{Arora1999PTAS}, \emph{near-dense} if the optimum scales as \(\Omega(n^{d-1/2+\xi})\) for some \(\xi \in (0, 1/2]\).
\end{definition}
This classification aligns with problem-specific conventions. For instance, a dense graph possesses \(\Omega(n^2)\) edges, while a dense \(3\)-\textsf{SAT} formula contains \(\Omega(n^3)\) clauses.

An \(n\)-variate degree-\(d\) \(\beta\)-smooth polynomial \(p(\x)\) can be decomposed while preserving the \(\beta\)-smoothness property.
\begin{lemma}
  Let \(p(\x)\) be an \(n\)-variate \(\beta\)-smooth polynomial of degree \(d\). Then, there exist a constant \(c\) and degree-\((d-1)\) \(\beta\)-smooth polynomials \(\{p_j(\x)\}_{j \in [n]}\) such that 
  \[
    p(\x) = c + \sum_{j=1}^n x_j\, p_j(\x).
  \]
\end{lemma}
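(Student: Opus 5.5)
The plan is to build the decomposition directly, monomial by monomial, using the rule ``factor out the smallest-index variable''. The smoothness bound then transfers automatically, because the construction never merges two monomials. Write the expansion of \(p\) as
\[
  p(\x) = \sum_{\alpha} a_\alpha\, \x^\alpha,\qquad \x^\alpha = \prod_{i\in[n]} x_i^{\alpha_i}.
\]
By \cref{def:smooth-polynomial}, \(|a_\alpha| \le \beta n^{d-|\alpha|}\), where \(|\alpha| = \sum_i \alpha_i \le d\).

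First, set \(c = a_{\bm 0}\), the constant term. For every nonconstant monomial \(\x^\alpha\), let \(j(\alpha) = \min\{i : \alpha_i \ge 1\}\) be the smallest index appearing in it. Then \(\x^\alpha = x_{j(\alpha)} \cdot \x^{\alpha - e_{j(\alpha)}}\), where \(e_j\) is the \(j\)-th unit vector. Define
\[
  p_j(\x) = \sum_{\alpha \ne \bm 0 \,:\, j(\alpha) = j} a_\alpha\, \x^{\alpha - e_j}.
\]
Every nonconstant monomial is assigned to exactly one \(j\), so \(p(\x) = c + \sum_{j} x_j p_j(\x)\) holds identically as polynomials, and in particular on \(\{0,1\}^n\). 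Each \(p_j\) has degree at most \(d-1\), since \(|\alpha - e_j| = |\alpha| - 1 \le d-1\).

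Second, I verify that each \(p_j\) is \(\beta\)-smooth as a polynomial of degree \(d-1\). That is, every degree-\(l'\) monomial of \(p_j\) must have coefficient of absolute value at most \(\beta n^{(d-1)-l'}\). The key point is that, for fixed \(j\), the map \(\alpha \mapsto \alpha - e_j\) is injective. Hence distinct monomials of \(p\) land on distinct monomials of \(p_j\), and no coefficients are summed. Consequently, the coefficient of \(\x^{\alpha-e_j}\) in \(p_j\) is exactly \(a_\alpha\). With \(l' = |\alpha|-1\), this gives
\[
  |a_\alpha| \le \beta n^{d-|\alpha|} = \beta n^{(d-1)-l'},
\]
which is the required bound.

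The only step needing care is this no-merging argument. A naive symmetric split, such as dividing each monomial among all of its variables or differentiating, would either change the coefficients by combinatorial factors or collide terms. The smallest-index assignment avoids both problems. If \(p\) happens to have degree below \(d\), the same bounds still hold, since the smoothness exponent is measured against \(d\) throughout.
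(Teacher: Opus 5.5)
Your proof is correct and follows the paper's construction. The lemma is stated without a formal proof, but the \textsc{Decompose} subroutine in the appendix peels off, for $i=1,\dots,n$ in order, all remaining monomials containing $x_i$, which is exactly your smallest-index assignment. Your observation that $\alpha \mapsto \alpha - e_j$ is injective, so coefficients are copied and never summed, supplies the smoothness check the paper leaves implicit; dividing out a single factor of $x_j$ also covers non-multilinear monomials, where the subroutine's phrase ``coefficient of $x_i$'' is imprecise.
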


Note that this decomposition may not be unique, however it does not affect our analysis.
Repeated application of this decomposition yields the following generalized structural property for smooth polynomials.

\begin{lemma}\label{lemma: decomposition of polynomials}
  Let \(p(\x)\) be an \(n\)-variate \(\beta\)-smooth polynomial of degree \(d\). For any integer \(l \in [d]\) and any index tuple \(I = (i_1,\dots,i_{d-l}) \in [n]^{d-l}\), the polynomial \(p_I(\x)\) admits the following decomposition:
  \[
    p_{I}(\x) = c_{I} + \sum_{j\in [n]} x_{j}\cdot p_{I,j}(\x).
  \]
\end{lemma}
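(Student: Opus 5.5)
The plan is to argue by induction on the depth \(d-l\) of the index tuple, applying the preceding one-step decomposition lemma at each level. The object \(p_I\) needs a precise definition first, and the claim I would actually prove carries a smoothness invariant: for every \(I\in[n]^{d-l}\), \(p_I\) is an \(n\)-variate polynomial of degree at most \(l\), and each coefficient of a degree-\(m\) monomial in \(p_I\) has absolute value at most \(\beta n^{l-m}\). In words, \(p_I\) is \(\beta\)-smooth relative to degree \(l\). The base case is \(l=d\), where \(I\) is the empty tuple and \(p_\emptyset=p\); this is the hypothesis.

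For the inductive step, suppose \(p_I\) is defined and satisfies the invariant with degree \(l\ge 1\). I would make the decomposition canonical so that no ambiguity arises. Set \(c_I\) to be the constant coefficient of \(p_I\). For each nonconstant monomial \(\prod_{k\in S}x_k\) of \(p_I\), with \(S\) a nonempty multiset, let \(j=\min S\) and assign the monomial to \(x_j\cdot\prod_{k\in S\setminus\{j\}}x_k\). Summing the monomials assigned to each \(j\) defines \(p_{I,j}\). Then \(p_I=c_I+\sum_j x_j\,p_{I,j}\) holds identically as polynomials. This construction exactly mirrors the preceding lemma, so one may alternatively just invoke that lemma with \(d\) replaced by \(l\).

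Next I would verify that the invariant descends to \(p_{I,j}\) with degree \(l-1\). Each monomial of degree \(m-1\) in \(p_{I,j}\) arises from exactly one monomial of degree \(m\) in \(p_I\), namely its product with \(x_j\), and it keeps the same coefficient. That coefficient is bounded by \(\beta n^{l-m}=\beta n^{(l-1)-(m-1)}\), which is precisely the required bound. The degree also drops by one, to at most \(l-1\). Iterating from \(l=d\) down to any \(l\in[d]\) defines \(p_I\) for every \(I\in[n]^{d-l}\) and yields the stated decomposition. The resulting \(p_{I,j}\) is indexed by a tuple of length \(d-l+1\), so it is consistent with the same definition at level \(l-1\).

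The main obstacle is bookkeeping rather than mathematics. The statement leaves \(p_I\) implicit, so the proof must fix the recursive definition \(p_{(i_1,\dots,i_{k+1})}:=(p_{(i_1,\dots,i_k)})_{i_{k+1}}\) and carry the relative smoothness bound \(\beta n^{l-m}\) rather than \(\beta n^{d-m}\), since the latter would not correctly track degree-\(l\) smoothness. A related subtlety is the non-uniqueness noted in the paper. Fixing the min-index assignment rule resolves it, and under that rule the coefficient bound holds without any summation blow-up, because each monomial is assigned to a single \(j\).
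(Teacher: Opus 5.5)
Your proposal is correct and takes essentially the paper's approach. The paper justifies this lemma only as ``repeated application'' of the one-step decomposition. Its \textsc{Decompose} subroutine, which removes for $i=1,\dots,n$ in order all remaining monomials containing $x_i$, is exactly your min-index assignment rule. Your extra invariant, that degree-$m$ coefficients of $p_I$ are bounded by $\beta n^{l-m}$, is worth keeping: it is what actually justifies the later bound $|c_I|\le \beta n^{l}$, which the paper says follows directly from the definition.
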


Utilizing this decomposition, we can establish quantitative bounds on the magnitude of the polynomial.
Based on the definition of \(\beta\)-smoothness, we derive the following recursive bound for the components of the hierarchical decomposition.

\begin{restatable}{corollary}{RestateCoroPolynomialBound}
\label{coro:polynomial-bound}
  Let \(p(\x)\) be an \(n\)-variate \(\beta\)-smooth polynomial of degree \(d\). For any fixed index tuple \(I = (i_1,\dots,i_{d-l})\) and any \(\x \in \{0,1\}^n\), the component \(p_{I}(\x)\) satisfies
  \[
    |p_{I}(\x)| \le \beta (l+1) n^{l}.
  \]
\end{restatable}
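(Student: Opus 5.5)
Throughout I use the canonical decomposition. Each monomial of \(p_I\) of degree at least one is assigned to the \(p_{I,j}\) indexed by its smallest variable index \(j\), and \(x_j\) is removed from that monomial. The constant term becomes \(c_I\). Iterating this from \(p_{\emptyset}=p\) (the case \(l=d\)), the component \(p_I\) with \(|I|=d-l\) is itself a polynomial. Its monomials are exactly the monomials \(x_{i_1}\cdots x_{i_{d-l}}\cdot m\) of \(p\), for suitable \(m\), with the prefix \(x_{i_1}\cdots x_{i_{d-l}}\) stripped off.

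The proof proceeds by induction on \(l\), proving the sharper structural claim first. The claim is that \(p_I\) has degree at most \(l\), and that each degree-\(k\) monomial of \(p_I\) has coefficient bounded in absolute value by \(\beta n^{l-k}\). This holds because such a monomial comes from a monomial of \(p\) of degree \(k+(d-l)\), whose coefficient is at most \(\beta n^{d-(k+d-l)}=\beta n^{l-k}\) by \cref{def:smooth-polynomial}. If one works with multilinear reductions (using \(x_j^2=x_j\) on \(\{0,1\}^n\)), several monomials of \(p\) may merge into one. To avoid this I keep the formal expansion, where coefficient bounds hold monomial by monomial. Alternatively, under the ``smallest index'' convention each monomial of \(p\) lands in exactly one component, so no merging occurs.

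Given this coefficient bound, evaluate at \(\x\in\{0,1\}^n\). Every monomial has value in \(\{0,1\}\), so
\[
|p_I(\x)| \le \sum_{k=0}^{l} (\#\text{degree-}k\text{ monomials})\cdot \beta n^{l-k}.
\]
The main obstacle is counting the monomials: the number of degree-\(k\) monomials in \(n\) variables is not exactly \(n^k\). I would handle this by indexing monomials as ordered tuples in \([n]^k\), which is consistent with the tuple indexing \(I\in[n]^{d-l}\) used in \cref{lemma: decomposition of polynomials}. There are at most \(n^k\) such tuples, since the multilinear monomials of degree \(k\) number \(\binom{n}{k}\le n^k\), and the tuple view only overcounts. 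Each of the \(l+1\) summands is then at most \(\beta n^{k}n^{l-k}=\beta n^l\), which gives \(|p_I(\x)|\le \beta(l+1)n^l\).

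As a cross-check, the bound also follows directly by induction from the recursion \(p_I=c_I+\sum_j x_j p_{I,j}\). Write \(B_l\) for the bound at level \(l\). The base case is \(B_0=\beta\), the bound on the constant. The induction step asks for \(B_l\le \beta n^l+nB_{l-1}\) with \(B_{l-1}=\beta l n^{l-1}\), which gives \(B_l=\beta(l+1)n^l\). Here \(|c_I|\le\beta n^{l}\) because \(c_I\) is a degree-0 coefficient of a level-\(l\) component, bounded via the original degree-\((d-l)\) coefficient of \(p\). This recursion is exactly the stated bound, and I would present the induction as the main proof.
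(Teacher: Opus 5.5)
Your proposal is correct, and the induction you say you would present as the main proof is exactly the paper's argument: the base case is $|c_I|\le\beta$, and the step is $|p_I(\x)|\le |c_I|+\sum_{j} x_{I,j}\,|p_{I,j}(\x)|\le \beta n^{l}+n\cdot\beta l n^{l-1}=\beta(l+1)n^{l}$, with $|c_I|\le\beta n^{l}$ read off from the degree-$(d-l)$ coefficient of $p$. Your direct counting argument (at most $n^{k}$ degree-$k$ monomials in $p_I$, each with coefficient at most $\beta n^{l-k}$, summed over $k=0,\dots,l$) is an equally valid non-inductive route that explains the factor $l+1$ as the number of degree layers; fixing the smallest-index decomposition (which is what \textsc{Decompose} does) also pins down the otherwise non-unique decomposition on which the bound $|c_I|\le\beta n^{l}$ tacitly depends.
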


Finally, we recall a global bound on the value of smooth polynomials established in prior work.

\begin{restatable}[Lemma 3.2, \citealt{Arora1999PTAS}]{lemma}{RestateLemmaPolynomialBoundArora}
\label{lemma: polynomial-bound-arora}
  Let \(p(\x)\) be an \(n\)-variate \(\beta\)-smooth polynomial of degree \(d\). If \(n>d\), then for any \(\x\in[0,1]^n\), it holds that \(|p(\x)|\le 2\,\beta e\, n^d\).
\end{restatable}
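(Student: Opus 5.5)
We bound \(p\) monomial by monomial and count monomials by degree. Write \(p(\x)=\sum_{l=0}^{d}\sum_{S} a_S \prod_{i\in S} x_i\), where \(S\) ranges over index multisets of size \(l\). By \cref{def:smooth-polynomial}, \(|a_S|\le \beta n^{d-l}\). For \(\x\in[0,1]^n\), every monomial satisfies \(0\le \prod_{i\in S}x_i\le 1\). The triangle inequality therefore gives
\[
|p(\x)|\le \sum_{l=0}^{d} N_l\,\beta n^{d-l},
\]
where \(N_l\) is the number of distinct degree-\(l\) monomials. The whole proof then reduces to bounding \(N_l\) so that the sum over \(l\) is at most \(2e n^d\).

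We first reduce to multilinear monomials. This is harmless for the intended use: on \(\{0,1\}^n\) one has \(x_i^2=x_i\), and the paper's polynomials (\MAXCUT{}, \MAXKSAT{}) are already multilinear. The only care needed is that reducing powers merges coefficients, which could break the smoothness constants. So I would simply take the multilinear form as the setting of the lemma, as in \citet{Arora1999PTAS}. In that case
\[
N_l=\binom{n}{l}\le \frac{n^l}{l!},
\]
and hence
\[
|p(\x)|\le \beta n^d\sum_{l=0}^{d}\frac{1}{l!}\le \beta e\, n^d,
\]
which is even stronger than the claim.

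If non-multilinear monomials must be allowed, I would instead count multisets. Here \(N_l=\binom{n+l-1}{l}\le \frac{(n+l-1)^l}{l!}\). Using \(l\le d<n\), we get \(n+l-1\le 2n\), so each term is at most \(\beta n^d\frac{2^l}{l!}\). The sum is then bounded by \(\beta e^2 n^d\), which is the wrong constant.

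To recover \(2e\), I would use the sharper bound
\[
\frac{(n+l-1)^l}{n^l}=\Bigl(1+\frac{l-1}{n}\Bigr)^l\le \exp\!\Bigl(\frac{l(l-1)}{n}\Bigr).
\]
This is small when \(l^2\lesssim n\), but it is not uniformly at most \(2\) for all \(l\le d<n\). So the main obstacle is exactly this counting step: the constant \(2e\) versus the \(e\) obtained in the multilinear case. The factor \(2\) is presumably slack absorbing either lower-order terms or a looser count such as \(\binom{n}{l}\le n^l/l!\) together with doubling. Concretely, I would present the multilinear argument, which gives \(\beta e n^d\le 2\beta e n^d\). I would note that the hypothesis \(n>d\) is only needed to keep the binomial or multiset estimates valid, namely \(l\le d<n\).
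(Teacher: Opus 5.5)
Your triangle-inequality reduction to $|p(x)|\le\beta\sum_{l=0}^{d}N_l\,n^{d-l}$ is correct, and so is the multilinear computation. However, the proposal as you conclude it does not prove the stated lemma, because it only handles multilinear $p$. Your justification for restricting to that case is $x_i^2=x_i$, which holds only on $\{0,1\}^n$. The lemma is asserted for all $x\in[0,1]^n$, where the identity fails: for example, $x_1^2$ and $x_1$ differ at $x_1=1/2$. This is not a harmless restriction. \cref{def:smooth-polynomial} imposes no multilinearity, and the proof of \cref{thm: randomized rounding lemma with McDiarmid} applies this lemma to $p_{I,i}(y)$ at the fractional LP solution $y$ for a general smooth $p$; multilinearity is assumed only for the deterministic rounding. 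Your fallback for repeated variables gives $e^2$, so the general case with constant $2e$ is left unproved.

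The missing step is to sum the multiset counts before estimating them, rather than comparing each $\binom{n+l-1}{l}n^{-l}$ to $1/l!$ separately. Term by term, as you noticed, you lose the factor $\prod_{j<l}(1+j/n)$, which is not uniformly bounded. With $N_l=\binom{n+l-1}{l}$, the negative binomial series gives, for $n\ge 2$,
\[
\sum_{l=0}^{d}\binom{n+l-1}{l}n^{-l}\;\le\;\sum_{l=0}^{\infty}\binom{n+l-1}{l}n^{-l}\;=\;\Bigl(1-\frac{1}{n}\Bigr)^{-n}\;=\;\Bigl(1+\frac{1}{n-1}\Bigr)^{n-1}\cdot\frac{n}{n-1}\;\le\;e\cdot 2 .
\]
Hence $|p(x)|\le 2e\beta n^d$ for every $x\in[0,1]^n$. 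This is where the constant $2e$ comes from, and it is also where the hypothesis $n>d$ is used. For $d\ge 1$ it forces $n\ge 2$, so the series converges at $1/n$ and $n/(n-1)\le 2$; the case $d=0$ is trivial.

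In your multilinear argument, $\binom{n}{l}\le n^l/l!$ holds for every $l$, so $n>d$ plays no role there. That is a further sign that the factor $2$ is not slack: it is the price of allowing repeated variables.
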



\subsection{The Quadratic Case}\label{sec: quadratic-case}
Given an \(n\)-variate degree-\(2\) \(\beta\)-smooth polynomial \(p(\x)\), consider the following problem:
\begin{equation}\tag{2-IP}\label{optimization: quadratic-ip}
  \begin{aligned}
    \max_{\x} &\quad p(\x) \\
    \text{s.t.} &\quad \x \in \{0,1\}^n.
  \end{aligned}
\end{equation}
This problem is known to be \NPhard{}, as it captures problems such as \MAXCUT{} (see \cref{example: maxcut}).
Since \(p(\x)\) is a quadratic objective, it can be expanded using \cref{lemma: decomposition of polynomials}:
\[
  p(\x) = \sum_{i\in [n]} x_i \cdot p_{i}(\x) + c,
\]
where \(p_i(\x)\) is a linear function and \(c\) is a constant.
To derive a tractable approximation, we linearize the objective function. Specifically, we relax \cref{optimization: quadratic-ip} to the following formulation:
\begin{equation}\tag{2-LP}\label{optimization: quadratic-ip-oracle}
  \begin{aligned}
    \max_{\x} &\quad c + \sum_{i\in [n]} x_i \cdot \rho_i \\
    \text{s.t.} &\quad \rho_i - \delta \leq p_i(\x) \leq \rho_i + \delta, \\
    &\quad \x \in [0,1]^n.
  \end{aligned}
\end{equation}
where \(\rho_i\) is an auxiliary variable approximating \(p_i(\x)\) within a tolerance parameter \(\delta\).
Since \(p_i(\x)\) is linear in \(\x\), this relaxation is a linear program and is solvable in polynomial time.
Moreover, when \(\delta = 0\) and \(\rho_i = p_i(\x^*)\) for all \(i\), the optimal solution of \cref{optimization: quadratic-ip-oracle} coincides with that of \cref{optimization: quadratic-ip}.
The remaining challenge lies in determining the appropriate approximation \(\rho_i\) and tolerance \(\delta\).

\subsubsection{Oracle-guided Relaxation}
Given a prediction \(\hat{\x}\) provided by an oracle, a natural strategy is to set \(\rho_i = p_i(\hat{\x})\) for all \(i \in [n]\).
With \(\rho_i\) fixed, it suffices to choose \(\delta \geq \max_{i \in [n]} |p_i(\hat{\x}) - p_i(\x^*)|\) to guarantee the feasibility of the optimal solution \(\x^*\) for the relaxed problem.
By exploiting the \(\beta\)-smoothness of \(p(\x)\), we can bound this deviation as follows.
\begin{restatable}{lemma}{RestateLemmaQuadraticToleranceOracle}
\label{lemma: quadratic-tolerance-oracle}
  Let \(p(\x)\) be an \(n\)-variate \(\beta\)-smooth polynomial of degree 2 with the decomposition \(p(\x) = \sum_{i \in [n]} x_i p_i(\x) + c\). For any two binary vectors \(\x^*, \hat{\x} \in \{0,1\}^n\), let \(\varepsilon = \|\x^* - \hat{\x}\|_1\). Then, for any \(i \in [n]\), the deviation of the linear component is bounded by:
  \[
    \left|p_i(\hat{\x}) - p_i(\x^*)\right| \leq \beta\sqrt{n}\sqrt{\varepsilon}.
  \]
\end{restatable}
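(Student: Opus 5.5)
The plan is to use that each $p_i$ is an affine function whose linear coefficients are bounded by $\beta$. The deviation $p_i(\hat{\x}) - p_i(\x^*)$ is then an inner product of a bounded coefficient vector with the difference vector $\hat{\x} - \x^*$, and Cauchy--Schwarz gives the bound.

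\emph{Step 1 (form of $p_i$).} By the decomposition lemma, each $p_i$ is a degree-$1$, $\beta$-smooth polynomial in $n$ variables. So $p_i(\x) = a_{i} + \sum_{j\in[n]} a_{ij} x_j$. By \cref{def:smooth-polynomial} applied with $d=1$ and $l=1$, we have $|a_{ij}| \le \beta n^{1-1} = \beta$ for every $j$.

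This is the one place that needs care, and I expect it to be the main (mild) obstacle: I must make sure the smoothness bound transfers to $p_i$ at degree $1$ rather than degree $2$. Concretely, $a_{ij}$ should come from the coefficient of the monomial $x_ix_j$ in $p$, which is a degree-$2$ monomial of a degree-$2$ polynomial. Its coefficient is therefore bounded by $\beta n^{0} = \beta$. I would fix a decomposition in which each quadratic monomial is assigned to exactly one $p_i$, so this bound is inherited directly. Any term $x_i^2$ can be handled the same way, or replaced by $x_i$ on Boolean inputs. The constant $a_i$ cancels in the difference, so its size is irrelevant.

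\emph{Step 2 (reduce to an inner product).} Subtracting the two evaluations gives
\[
p_i(\hat{\x}) - p_i(\x^*) = \sum_{j\in[n]} a_{ij}(\hat{x}_j - x^*_j).
\]
By Cauchy--Schwarz,
\[
\bigl|p_i(\hat{\x}) - p_i(\x^*)\bigr| \le \Bigl(\sum_j a_{ij}^2\Bigr)^{1/2} \|\hat{\x}-\x^*\|_2 \le \beta\sqrt{n}\,\|\hat{\x}-\x^*\|_2 .
\]

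\emph{Step 3 (use binarity).} Since $\hat{\x},\x^*\in\{0,1\}^n$, every entry of $\hat{\x}-\x^*$ lies in $\{-1,0,1\}$. Hence $\|\hat{\x}-\x^*\|_2^2 = \|\hat{\x}-\x^*\|_1 = \varepsilon$, which gives the claimed bound $\beta\sqrt{n}\sqrt{\varepsilon}$.

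As a sanity check, the cruder estimate $\sum_j |a_{ij}||\hat{x}_j-x^*_j| \le \beta\varepsilon$ also implies the statement, because $\varepsilon \le n$ gives $\beta\varepsilon \le \beta\sqrt{n\varepsilon}$. Either route works, and I would present the Cauchy--Schwarz one since it matches the stated form.
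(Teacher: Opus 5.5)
Your proof is correct and matches the paper's argument: you write $p_i$ as an affine function whose linear coefficients are bounded by $\beta$ via smoothness, apply Cauchy--Schwarz, and use $(\hat{x}_j - x^*_j)^2 = |\hat{x}_j - x^*_j|$ for binary vectors to identify the second factor with $\sqrt{\varepsilon}$. Your side remark is also right: the direct estimate $\beta\varepsilon$ is never worse than $\beta\sqrt{n\varepsilon}$, because $\varepsilon \le n$.
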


Accordingly, we set the tolerance to \(\delta := \beta\sqrt{n}\sqrt{\varepsilon}\).
Note that the value of \(\varepsilon\) is initially unknown.
However, since \(\x \in \{0,1\}^n\), \(\varepsilon\) is an integer bounded by \(n\). Consequently, we can enumerate all possible values of \(\varepsilon\) and solve \cref{optimization: quadratic-ip-oracle} to identify the best solution, while preserving polynomial time complexity.
With the coefficients \(\rho_i\) and tolerance \(\delta\) determined, we proceed to quantify the relaxation gap.

\subsubsection{Analysis of the Relaxation Gap}
Let \(\y\) denote an optimal solution to the relaxed problem formulated in \cref{optimization: quadratic-ip-oracle}. 
We now demonstrate that \cref{optimization: quadratic-ip-oracle} serves as an effective approximation for \cref{optimization: quadratic-ip} by establishing that \(p(\y)\) closely approximates \(p(\x^*)\).
By leveraging the feasibility and the optimality of \(\y\) within \cref{optimization: quadratic-ip-oracle}, we derive the following lower bound:
\begin{equation}\label{gap-between-fractional-solution-and-opt-solution}
  \begin{aligned}
    p(\y) & \geq c + \sum_{i\in[n]} y_i (\rho_i - \delta)\\
    &\geq c + \sum_{i\in[n]} x^*_i \rho_i - \sum_{i\in[n]} y_i \delta \\
    &\geq c + \sum_{i\in[n]} x^*_i (p_i(\x^*) - \delta) - n \delta\\
    &\geq p(\x^*) - 2n\delta.
  \end{aligned}
\end{equation}
The first inequality is from the feasibility of \(\y\).
The second inequality is from the optimality of \(\y\).
The third inequality is from the feasibility of \(\x^*\), alongside the bound \(\sum_{i \in [n]} y_i \leq n\).
Consequently, substituting \(\delta = \beta\sqrt{n}\sqrt{\varepsilon}\) yields:
\begin{equation}\label{gap-between-fractional-solution-and-opt-solution-quadratic}
  p(\y) \geq p(\x^*) - 2\beta n^{3/2}\sqrt{\varepsilon}.
\end{equation}

\begin{remark}
  The tolerance parameter \(\delta\) governs the trade-off between \emph{feasibility} and \emph{relaxation tightness}. An overly small value may render the optimal integer solution \(\x^*\) infeasible, whereas an excessively large value widens the relaxation gap, which scales linearly with \(\delta\). Our choice \(\delta=\beta\sqrt{n}\sqrt{\varepsilon}\) guarantees feasibility while bounding the gap by \(O(\beta n^{3/2}\sqrt{\varepsilon})\).
\end{remark}

\subsubsection{Randomized Rounding}
The optimal solution \(\y\) to the relaxation \cref{optimization: quadratic-ip-oracle} is typically fractional.
To recover an integral solution, we round the fractional vector \(\y\) to a binary one.
We employ \emph{randomized rounding}, a standard technique widely used in approximation algorithms~\cite{Raghavan1987Randomized,goemans1995improved}.
Formally, each variable \(z_j\) is independently set to \(1\) with probability \(y_j\) and to \(0\) otherwise.
We further demonstrate that the rounded solution \(\z\) preserves the objective value with high probability, i.e., \(p(\z) \approx p(\y)\).

\begin{restatable}{theorem}{RestateThmRoundingErrorQuadratic}
\label{thm:rounding-error-quadratic}
  Let \(p(\x)\) be an \(n\)-variate quadratic \(\beta\)-smooth polynomial.
  Let \(\y \in [0,1]^n\) be a fractional vector, and let \(\z \in \{0,1\}^n\) be the integral vector derived via independent randomized rounding with \(\Pr[z_i=1]=y_i\).
  For any \(k \ge 1\), with probability at least \(1 - 4/n^{k}\), the following bound holds:
  \begin{equation}\label{ineq: rounding-error-quadratic}
  \begin{aligned}
    \left|p(\z) - p(\y)\right|
    \le 3n\beta \sqrt{\frac{k+1}{2}}\sqrt{n\ln n}.
  \end{aligned}
\end{equation}
\end{restatable}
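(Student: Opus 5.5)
Write \(p(\x)=c+\sum_{i\in[n]} x_i p_i(\x)\) with \(p_i(\x)=a_i+\sum_{j} a_{ij}x_j\) linear, where \(|a_i|\le \beta n\) and \(|a_{ij}|\le\beta\) by \(\beta\)-smoothness. I would compare \(p(\z)\) with \(p(\y)\) in two stages, each controlled by Hoeffding's inequality applied to a sum of independent bounded variables:
\[
p(\z)-p(\y)=\underbrace{\sum_i z_i\bigl(p_i(\z)-p_i(\y)\bigr)}_{(A)}+\underbrace{\sum_i (z_i-y_i)\,p_i(\y)}_{(B)} .
\]

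\emph{Term (B).} Here \(p_i(\y)\) is a deterministic number. Since \(p_i\) is a \(\beta\)-smooth linear polynomial of degree 1, \cref{coro:polynomial-bound} (or the direct estimate \(|a_i|+\sum_j|a_{ij}|\le 2\beta n\), extended to \(\y\in[0,1]^n\) by linearity) gives \(|p_i(\y)|\le 2\beta n\). Thus \((B)\) is a sum of \(n\) independent mean-zero terms, each lying in an interval of length at most \(2\beta n\). Hoeffding's inequality with deviation \(t=2\beta n\sqrt{\tfrac{k+1}{2}}\sqrt{n\ln n}\) gives failure probability at most \(2\exp(-2t^2/(n\cdot 4\beta^2n^2))=2n^{-(k+1)}\).

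\emph{Term (A).} For each fixed \(i\), \(p_i(\z)-p_i(\y)=\sum_j a_{ij}(z_j-y_j)\) is a sum of \(n\) independent mean-zero terms, each lying in an interval of length at most \(\beta\). Hoeffding with deviation \(s=\beta\sqrt{\tfrac{k+1}{2}}\sqrt{n\ln n}\) gives failure probability at most \(2n^{-(k+1)}\). A union bound over the \(n\) indices then shows that, with probability at least \(1-2n^{-k}\), every coordinate deviation is at most \(s\). On that event, \(|(A)|\le \sum_i z_i s\le ns\). This handles the dependence between \(z_i\) and \(p_i(\z)\) without needing decoupling, because we simply bound each factor pointwise on the good event.

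\emph{Combining.} Both good events hold with probability at least \(1-2n^{-k}-2n^{-(k+1)}\ge 1-4/n^k\). On their intersection,
\[
|p(\z)-p(\y)|\le ns+t=3n\beta\sqrt{\tfrac{k+1}{2}}\sqrt{n\ln n},
\]
which is exactly the stated bound. The main obstacle is the constant bookkeeping. I need the per-term range in (B) to be \(2\beta n\), not a larger bound coming from \cref{lemma: polynomial-bound-arora}, so that the coefficients add up to exactly \(3\). If the decomposition assigns diagonal terms \(x_i^2=x_i\) differently, I would first absorb them into the linear parts so that the ranges used above still hold.
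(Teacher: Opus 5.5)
Your proposal is correct and follows the paper's proof essentially step for step: the same split into \(\sum_i z_i\bigl(p_i(\z)-p_i(\y)\bigr)\) and \(\sum_i (z_i-y_i)\,p_i(\y)\), a per-coordinate concentration bound with a union bound over \(i\) for the first term, the estimate \(|p_i(\y)|\le 2\beta n\) for the second, and identical constants and failure probabilities (your Hoeffding step is the paper's McDiarmid step for these independent sums). One quibble: your fallback of absorbing \(x_i^2\) into \(x_i\) is unnecessary and would change \(p(\y)\) at fractional \(\y\), whereas if you keep the diagonal terms inside \(p_i\), then \(p_i(\z)-p_i(\y)=\sum_j a_{ij}(z_j-y_j)\) is still a sum of independent mean-zero terms and your argument goes through unchanged.
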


\begin{remark}
  The rounding error in the quadratic case scales as \(O(n^{3/2}\sqrt{\ln n})\), i.e., \(\tilde{O}(n^{3/2})\), which is almost of the same magnitude as the relaxation gap in \cref{gap-between-fractional-solution-and-opt-solution-quadratic} for a fixed oracle.
  Notably, this error is independent of the oracle's quality, implying that \cref{ineq: rounding-error-quadratic} captures the intrinsic cost of the randomized rounding strategy, which cannot be reduced by improving the oracle's precision.
  Furthermore, the randomized rounding procedure can be \emph{derandomized} using the method of conditional expectations~\cite{Raghavan1988Probabilistic}, leading to a deterministic rounding procedure with the same error bound, at the cost of a polynomial increase in time complexity. 
\end{remark}

\subsubsection{Comprehensive Guarantee}
We now establish the performance guarantee by combining the relaxation gap \cref{gap-between-fractional-solution-and-opt-solution-quadratic} with the rounding error \cref{ineq: rounding-error-quadratic}.

\begin{restatable}{theorem}{RestateThmComprehensiveGuaranteeQuadratic}
\label{thm: comprehensive-guarantee-quadratic}
  Let \(\x^*\) be the optimal solution to \cref{optimization: quadratic-ip}, and let \(\z\) be the integral solution obtained via randomized rounding from the fractional solution to \cref{optimization: quadratic-ip-oracle}. With probability at least \(1 - 4n^{-k}\), we have
  \begin{equation}\label{gap-between-approximated-solution-and-opt-solution}
    p(\z) \ge p(\x^*) - 2\beta n^{3/2}\sqrt{\varepsilon} - 3\beta n^{3/2} \sqrt{\frac{k+1}{2} \ln n} \;.
  \end{equation}
\end{restatable}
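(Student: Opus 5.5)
The plan is to chain the deterministic relaxation bound with the probabilistic rounding bound through a triangle-type decomposition. Write
\[
p(\z) - p(\x^*) = \bigl(p(\z) - p(\y)\bigr) + \bigl(p(\y) - p(\x^*)\bigr),
\]
where \(\y\) is the optimal fractional solution of \cref{optimization: quadratic-ip-oracle}, computed with \(\rho_i = p_i(\hat{\x})\) and \(\delta = \beta\sqrt{n}\sqrt{\varepsilon}\).

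The second difference is handled deterministically. By \cref{lemma: quadratic-tolerance-oracle}, this choice of \(\delta\) makes \(\x^*\) feasible for \cref{optimization: quadratic-ip-oracle}: take \(\rho_i = p_i(\hat{\x})\), so that \(|p_i(\x^*) - \rho_i| \le \delta\). Therefore the chain \cref{gap-between-fractional-solution-and-opt-solution} applies and gives \cref{gap-between-fractional-solution-and-opt-solution-quadratic}, namely \(p(\y) \ge p(\x^*) - 2\beta n^{3/2}\sqrt{\varepsilon}\).

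Two points need care here.

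\begin{itemize}
\item The value of \(\varepsilon\) is unknown, so the algorithm enumerates \(\varepsilon \in \{0,\dots,n\}\) and keeps the best output. For the argument it suffices to analyze the run with the true \(\varepsilon\). To make the comparison valid, the selection rule should compare the values \(p(\z)\) across runs, or the LP values. Either way, the returned solution is at least as good as the one produced by the correct guess.
\item In the first inequality of the chain, the linearized value \(c + \sum_i y_i \rho_i\) must lower-bound \(p(\y)\) up to \(n\delta\). This holds because \(p(\y) = c + \sum_i y_i p_i(\y)\), the constraint gives \(p_i(\y) \ge \rho_i - \delta\), and \(y_i \ge 0\). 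This step uses that \(p_i\) is linear, so the decomposition identity holds at fractional points as well.
\end{itemize}

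The first difference is handled by \cref{thm:rounding-error-quadratic}. Applied to \(\y\), it states that with probability at least \(1 - 4n^{-k}\) we have \(|p(\z) - p(\y)| \le 3n\beta\sqrt{\tfrac{k+1}{2}}\sqrt{n\ln n} = 3\beta n^{3/2}\sqrt{\tfrac{k+1}{2}\ln n}\). Only the lower side, \(p(\z) \ge p(\y) - 3\beta n^{3/2}\sqrt{\tfrac{k+1}{2}\ln n}\), is needed. Adding the two bounds gives exactly \cref{gap-between-approximated-solution-and-opt-solution}, and it holds on the same event, so the probability stays \(1 - 4n^{-k}\).

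No genuinely hard step remains, since both ingredients are already established. The main subtlety is the probability accounting under enumeration over \(\varepsilon\). If rounding is applied in every one of the \(n+1\) runs, a naive union bound would inflate the failure probability. I would avoid this by conditioning only on the rounding event for the run with the correct \(\varepsilon\). On that event, picking the candidate with the largest \(p(\z)\) can only improve the result, so the stated probability \(1 - 4n^{-k}\) is preserved. If instead the paper selects the run by LP value, I would round only the selected \(\y\). Since the selected \(\y\) still satisfies the relaxation bound, the argument goes through with a single application of \cref{thm:rounding-error-quadratic}.
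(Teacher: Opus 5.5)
Your argument is correct and is the paper's proof: you add the deterministic relaxation gap \(p(\y)\ge p(\x^*)-2\beta n^{3/2}\sqrt{\varepsilon}\) to the lower side of \cref{thm:rounding-error-quadratic}, on that theorem's event of probability at least \(1-4n^{-k}\).

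One aside is wrong, although it does not affect the theorem, which concerns the run with the correct \(\delta\). Selecting among the enumerated runs by raw LP value is not safe: the LP objective \(c+\sum_i x_i p_i(\hat{\x})\) does not depend on the guess \(\varepsilon'\), while the feasible region grows with it, so that rule favours the loosest run, for which \(p(\y)\ge \mathrm{LP}-n\delta'\) only guarantees a loss of order \(\beta n^{2}\); select instead by \(p(\z)\) as \cref{algorithm: pipeline} does, or by \(p(\y)\), or by \(\mathrm{LP}-n\delta'\).
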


This result holds for several classes of quadratic objectives, such as \MAXCUT{}, \MAXDICUT{}, and \MAXtwoSAT{}.
For \emph{near-dense} problem where \(p(\x^*) = \Omega(n^{3/2 + \xi})\) with some \(\xi \in (0, \tfrac{1}{2}]\), this yields a multiplicative approximation ratio of \(1 - \tilde{O}(\sqrt{\varepsilon}/n^{\xi})\).

\subsection{Generalization to Higher Orders}\label{sec: general-case}
We now extend our analysis to the general case, specifically, for an \(n\)-variate \(\beta\)-smooth polynomial \(p(\x)\) of degree \(d\):
\begin{equation}\tag{d-IP}\label{optimization: d-IP}
  \begin{array}{cl}
    \displaystyle\max_{\x} & p(\x)\\
    \text{s.t.} & \x \in \{0,1\}^n.\\
  \end{array}
\end{equation}

Analogous to the quadratic setting~(\cref{sec: quadratic-case}), we relax this integer program to a linear programming formulation by leveraging an oracle prediction \(\hat\x\).
The primary challenge stems from the \emph{decomposition}: while the identity \(p(\x) = c+\sum_{i \in [n]} x_i \cdot p_i(\x)\) remains valid, the coefficient functions \(p_i(\x)\) are generally non-linear; specifically, each \(p_i(\x)\) is a degree-\((d - 1)\) \(\beta\)-smooth polynomial.
Consequently, a single decomposition step reformulates the primary optimization problem \cref{optimization: d-IP} as: 
\begin{equation}\notag\label{optimization: d-LP-oracle-step1}
  \begin{aligned}
    \displaystyle\max_{\x} \quad& c + \sum_{j\in [n]} x_j\cdot \rho_j\\
    \text{s.t.} \quad& p_j(\x) \in \rho_j \pm \delta_j \quad \forall j \in [n]\\
    & \x \in \{0,1\}^n\\
  \end{aligned}
\end{equation}
Although the objective function becomes linear in \(x_j\) and the auxiliary variables \(\rho_j\), the constraints \(p_j(\x) \in \rho_j \pm \delta_j\) remain non-linear.
To systematically address the non-linearity, we apply the decomposition recursively.
For any valid index tuple \(I\) encountered during the process, we define the linear approximation \(q_I(\x)\) of the polynomial term \(p_I(\x)\) as:
\(
q_I(\x) = c_I + \sum_{j\in [n]} x_{j} \cdot p_{I,j}(\hat\x).
\)
The recursive procedure replaces each non-linear constraint on \(p_I(\x)\) with linear constraints derived from \(q_I(\x)\), continuing until the system is fully linearized.
Let \(\mathcal{I}\) denote the set of all valid index tuples generated by this process.
Using these approximations, we formulate the relaxed problem as:
\begin{equation}\tag{d-LP}\label{optimization: d-LP-oracle}
  \begin{aligned}
    \displaystyle\max_{\x} \quad& c + \sum_{j\in [n]} x_j\cdot p_j(\hat{\x})\\
    \text{s.t.} \quad& q_I(\x) \;\in\; p_I(\hat{\x}) \;\pm\; \delta_I\quad \forall I \in \mathcal{I}\\
    & \x \in [0,1]^n.\\
  \end{aligned}
\end{equation}

Since both the objective function and constraints in \cref{optimization: d-LP-oracle} are linear, the problem can be solved efficiently using standard linear programming solvers.
The remaining challenge is to determine the appropriate tolerance \(\delta_I\) for the constraints in \cref{optimization: d-LP-oracle}.

\subsubsection{Oracle-guided Relaxation}\label{sec: oracle-guided-relaxation-general}
Analogous to the quadratic case, we define the tolerance parameter \(\delta_I\) based on the deviation of the optimal solution \(\x^*\) from the oracle prediction \(\hat{\x}\):
\[
\delta_I := \left|q_I(\x^*) - q_I(\hat{\x})\right| = \left|q_I(\x^*) - p_I(\hat{\x})\right|.
\]
In the case where \(|I| = d - 1\) (i.e., the linear terms), because \(|c_{I,j}| \leq \beta\), \cref{coro:polynomial-bound} implies:
\[
  \delta_I = \left|\sum_{j\in [n]} (x_{I,j}^* - \hat{x}_{I,j})\cdot c_{I,j}\right| \leq \beta \sqrt{n}\sqrt{\varepsilon}.
\]
Next, for the general case where \(|I| < d - 1\), recall that \(p_{I,j}(\hat{\x})\) is a degree-\((d - |I| - 1)\) \(\beta\)-smooth polynomial. Expanding the expression and applying \cref{lemma: polynomial-bound-arora}, we obtain the following bound:
\[
\delta_I = \left|\sum_{j\in [n]} (x_{I,j}^* - \hat{x}_{I,j})\cdot p_{I,j}(\hat{\x})\right| \leq 2\beta e n^{d - |I| - \nicefrac{1}{2}}\sqrt{\varepsilon}.
\]
These settings ensure that the true optimal solution satisfies the relaxed constraints.

\subsubsection{Analysis of the Relaxation Gap}
We now bound the relaxation gap, defined as the discrepancy between the optimal value of the relaxed problem and the original integer program.

\begin{restatable}{theorem}{RestateThmRelaxationGap}
\label{thm:relaxation-gap}
Let \(\y\) be the optimal solution to the relaxed problem \cref{optimization: d-LP-oracle} and \(\x^*\) be the optimal solution to the original integer program \cref{optimization: d-IP}. The relaxation gap is bounded by:
\[
p(\y) \geq p(\x^*) - 2\left[2e(d - 2) + 1\right] \beta n^{d - 1/2} \sqrt{\varepsilon}.
\]
\end{restatable}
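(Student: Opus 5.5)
We mimic the quadratic argument in \cref{gap-between-fractional-solution-and-opt-solution}, but run it recursively down the decomposition tree. The basic tool is a one-level \emph{error-propagation} inequality. For every \(I\in\mathcal{I}\) and every \(\x\) feasible for \cref{optimization: d-LP-oracle}, we want to compare \(p_I(\x)\) with \(p_I(\hat\x)\). Write \(p_I(\x) = c_I + \sum_j x_j p_{I,j}(\x)\), and compare it with \(q_I(\x) = c_I + \sum_j x_j p_{I,j}(\hat\x)\). This gives
\[
|p_I(\x) - p_I(\hat\x)| \le |q_I(\x) - p_I(\hat\x)| + \sum_{j} x_j\,|p_{I,j}(\x) - p_{I,j}(\hat\x)| \le \delta_I + n \max_j |p_{I,j}(\x)-p_{I,j}(\hat\x)|.
\]
Define \(E_l\) as the worst-case deviation \(|p_I(\y)-p_I(\hat\x)|\) over tuples at depth \(l\), where \(p_I\) has degree \(d-l\). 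Unrolling the inequality, we expect \(E_l \le \sum_{m\ge l} n^{m-l}\delta^{(m)}\), with \(\delta^{(m)}\) the tolerance at depth \(m\). At the bottom level (\(|I|=d-1\)) the constraint is exactly on \(p_I\) itself, since \(p_I\) is linear and \(q_I=p_I\), so the recursion terminates.

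Next we substitute the tolerances computed in \cref{sec: oracle-guided-relaxation-general}. At the linear level, \(\delta_I\le \beta\sqrt{n}\sqrt\varepsilon\). At depth \(|I|<d-1\), \(\delta_I\le 2\beta e n^{d-|I|-1/2}\sqrt\varepsilon\). Multiplying the depth-\(m\) tolerance by \(n^{m-1}\) to propagate up to depth \(1\) gives exactly \(n^{d-3/2}\sqrt\varepsilon\) times a constant, independent of \(m\). The levels \(m=1,\dots,d-2\) contribute \(2e\beta\) each, and the linear level contributes \(\beta\). Hence every first-level component satisfies
\[
|p_j(\y)-p_j(\hat\x)| \le [2e(d-2)+1]\,\beta n^{d-3/2}\sqrt\varepsilon =: \Delta,
\]
and the same bound holds with \(\x^*\) in place of \(\y\), since \(\x^*\) is feasible by construction of the \(\delta_I\). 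Matching the constant \([2e(d-2)+1]\) is the main thing to check. The key point is that the number of intermediate nonlinear levels is \(d-2\), so one must index carefully whether the top-level objective row is itself counted as a constraint. I would first verify the bookkeeping on \(d=2\), where it must reduce to \(\Delta=\beta\sqrt n\sqrt\varepsilon\) as in \cref{lemma: quadratic-tolerance-oracle}, and on \(d=3\).

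Finally we repeat the four-line chain of \cref{gap-between-fractional-solution-and-opt-solution}, with \(\rho_j=p_j(\hat\x)\) and \(\Delta\) in place of \(\delta\):
\[
p(\y)=c+\sum_j y_j p_j(\y)\ge c+\sum_j y_j p_j(\hat\x) - n\Delta \ge c+\sum_j x^*_j p_j(\hat\x) - n\Delta \ge p(\x^*) - 2n\Delta.
\]
The middle inequality uses optimality of \(\y\) in \cref{optimization: d-LP-oracle} together with feasibility of \(\x^*\). The outer two use the propagated bound \(\Delta\) at \(\y\) and at \(\x^*\), together with \(\sum_j y_j\le n\) and \(\sum_j x^*_j \le n\). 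Since \(2n\Delta = 2[2e(d-2)+1]\beta n^{d-1/2}\sqrt\varepsilon\), this is the claimed bound.

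The main obstacle is that the propagation step requires \(\y\) to control every nonlinear descendant \(p_{I,j}\). The LP only constrains \(q_I\), not \(p_I\), so the induction must go bottom-up. It must also bound \(|p_{I,j}(\y)-p_{I,j}(\hat\x)|\) for fractional \(\y\), which is fine because the identity \(p_I=c_I+\sum_j x_jp_{I,j}\) is polynomial and holds on \([0,1]^n\).
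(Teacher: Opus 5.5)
Your proposal is correct and is essentially the paper's proof. The paper also uses optimality of $\y$ and feasibility of $\x^*$ to reduce the claim to bounding $|p(\x)-q(\x)|$ at $\x=\y$ and $\x=\x^*$, and it bounds this through the same one-level recurrence $|p_{I,j}(\x)-p_{I,j}(\hat{\x})| \le |p_{I,j}(\x)-q_{I,j}(\x)| + \delta_{I,j}$ unrolled down the decomposition tree; the only cosmetic difference is that the paper sums $\sum_{I\in\mathcal{I}}\delta_I$ over all tuples ($n^{m}$ of them at depth $m$), whereas you take a per-level maximum and multiply by $n$, which gives the identical constant $2e(d-2)+1$.
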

\begin{proof}[Proofsketch]
    Analogous to the quadratic case gap \cref{gap-between-fractional-solution-and-opt-solution-quadratic}, the gap between \(p(\y)\) and \(p(\x^*)\) follows from the feasibility of \(\x^*\) and the optimality of \(\y\) within \cref{optimization: d-LP-oracle}.
    In this general setting, the total error bound consists of the sum of the tolerances \(\delta_I\) derived in \cref{sec: oracle-guided-relaxation-general} over all valid indices \(I\).
    Summing these \(\delta_I\) terms yields the final result.
\end{proof}

\subsubsection{Rounding}\label{sec:improved-rounding-strategy}
To recover an integral solution \(\z\) from the fractional solution \(\y\) obtained via \cref{optimization: d-LP-oracle}, we generally employ independent randomized rounding, analogous to the quadratic case. In this general setting, the rounding error is bounded by the following concentration inequality.

\begin{restatable}{theorem}{RestateThmRandomizedRoundingMcDiarmid}
\label{thm: randomized rounding lemma with McDiarmid}
Let $\y \in [0,1]^n$, and let $\z \in \{0,1\}^n$ be generated via independent randomized rounding where $\Pr[z_i = 1] = y_i$ for all $i \in [n]$. Consider an $n$-variate degree-$d$ polynomial $p(\x)$ that is $\beta$-smooth. For any $k > d$, with probability at least $1 - 2d/n^{\,k+1 - (d - 1)}$,
\[
  |p(\y) - p(\z)| \le \big(1 + 2e\,(d-2)\big)\, \beta\, n^{d-1}\, \sqrt{\tfrac{k+1}{2}}\, \sqrt{n\ln n}.
\]
\end{restatable}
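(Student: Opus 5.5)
We proceed by induction on the decomposition hierarchy of \cref{lemma: decomposition of polynomials}, mirroring Arora et al.'s Lemma 3.3, with McDiarmid's bounded-difference inequality as the concentration tool at each level. Write \(p(\x)=c+\sum_{j}x_j p_j(\x)\). Then
\[
p(\z)-p(\y)=\sum_j (z_j-y_j)\,p_j(\y)+\sum_j z_j\big(p_j(\z)-p_j(\y)\big).
\]
The first sum is a linear form in the independent variables \(z_j\) with mean zero. By \cref{lemma: polynomial-bound-arora}, each coefficient satisfies \(|p_j(\y)|\le 2\beta e n^{d-1}\), since \(p_j\) is \(\beta\)-smooth of degree \(d-1\) and \(\y\in[0,1]^n\). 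For the base level \(d=2\) we use the sharper bound \(\beta(\,\cdot\,)\) coming from the linear coefficients, as in \cref{coro:polynomial-bound}. McDiarmid (equivalently Hoeffding) with \(t=c_j\sqrt{\tfrac{k+1}{2}n\ln n}\), where \(c_j\) is the coefficient bound, gives deviation at most \(c\,n^{d-1}\sqrt{\tfrac{k+1}{2}}\sqrt{n\ln n}\) with failure probability \(2n^{-(k+1)}\).

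For the second sum, since \(0\le z_j\le 1\), we have \(\big|\sum_j z_j(p_j(\z)-p_j(\y))\big|\le \sum_j |p_j(\z)-p_j(\y)|\). Each \(p_j\) is a degree-\((d-1)\) \(\beta\)-smooth polynomial in the same variables. Applying the same one-step decomposition recursively to \(p_j\), then to \(p_{j,j'}\), and so on, reduces everything to linear forms at depth \(d-1\). At that depth the coefficients are \(\le\beta\), and each linear form deviates by at most \(\beta\sqrt{\tfrac{k+1}{2}n\ln n}\).

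The key bookkeeping is how the errors scale across levels. At level \(l\) (tuples \(I\) with \(|I|=l\)) there are \(n^{l}\) linear forms \(\sum_j (z_j-y_j)p_{I,j}(\y)\). Their coefficients are bounded by \(2\beta e n^{d-l-1}\) via \cref{lemma: polynomial-bound-arora}, or by \(\beta\) at the last level. Summing over the \(n^l\) tuples, each level contributes at most \(2e\beta n^{d-1}\sqrt{\tfrac{k+1}{2}n\ln n}\). The top level \(l=0\) and the bottom level \(l=d-1\) are handled slightly differently, giving the \(1\) in the constant. The \(d-2\) intermediate levels give the \(2e(d-2)\) term, which yields the stated constant \(1+2e(d-2)\).

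The main obstacle is the union bound. We need every linear form at every level to concentrate simultaneously, which is at most \(\sum_{l<d} n^l\le d\,n^{d-1}\) events. Each event fails with probability \(2n^{-(k+1)}\), so the total failure probability is at most \(2d\,n^{d-1}/n^{k+1}=2d/n^{k+1-(d-1)}\), which matches the statement; the requirement \(k>d\) makes this nontrivial.

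A second subtlety is that the recursion replaces \(p_I(\z)-p_I(\y)\) by the linear part in \(\z-\y\) evaluated with coefficients at \(\y\), plus \(z\)-weighted differences of lower-order terms. We must check that the telescoping identity
\[
p_I(\z)-p_I(\y)=\sum_j (z_j-y_j)p_{I,j}(\y)+\sum_j z_j\big(p_{I,j}(\z)-p_{I,j}(\y)\big)
\]
holds exactly at each level, and that the coefficients \(p_{I,j}(\y)\) are deterministic. They are, because \(\y\) is fixed, so McDiarmid applies directly to each linear form. I expect verifying the per-level constants, to land exactly on \((1+2e(d-2))\) rather than a looser multiple, to be the fiddliest part.
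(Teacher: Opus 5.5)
Your argument follows the paper's proof essentially step for step. Both use the exact identity $p_I(\z)-p_I(\y)=\sum_j (z_j-y_j)p_{I,j}(\y)+\sum_j z_j\big(p_{I,j}(\z)-p_{I,j}(\y)\big)$, apply McDiarmid to each mean-zero linear form with coefficients bounded via \cref{lemma: polynomial-bound-arora}, and take a union bound over at most $\sum_{l<d}n^l\le dn^{d-1}$ events. Your probability bookkeeping is correct.

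\textbf{The gap is the final constant.} Write $T=\sqrt{\tfrac{k+1}{2}}\sqrt{n\ln n}$. By your own accounting:
\begin{itemize}
\item level $l=0$ is the form $\sum_j(z_j-y_j)p_j(\y)$ with $|p_j(\y)|\le 2\beta e n^{d-1}$, so it contributes $2e\beta n^{d-1}T$, exactly like an intermediate level;
\item levels $1,\dots,d-2$ contribute $2e(d-2)\beta n^{d-1}T$;
\item level $d-1$ contributes $\beta n^{d-1}T$.
\end{itemize}
The total is $(1+2e(d-1))\beta n^{d-1}T$. Nothing supports the claim that the top and bottom levels together ``give the $1$''. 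Your $d=2$ remark does not help either. The sharper coefficient bound $\beta$ applies only to the bottom-level forms. The top-level coefficients $p_j(\y)$ are linear polynomials with constant term up to $\beta n$, so they can reach $2\beta n$. This gives $3\beta nT$, which is exactly the constant of \cref{thm:rounding-error-quadratic}.

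The paper's proof makes the same unexplained jump (``propagating \ldots\ and aggregating''), so following it does not close the gap. Its constant appears to be carried over from \cref{thm:relaxation-gap}. There the top level genuinely costs nothing, because the LP objective is already linear. In rounding, the top-level form is an additional level.

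No bookkeeping can deliver the stated constant at $d=2$. Take $p(\x)=\beta n\sum_i x_i+\beta\sum_{i<j}x_ix_j$, which is $\beta$-smooth of degree $2$, and $\y=(\tfrac12,\dots,\tfrac12)$. With $W=\sum_i z_i-\tfrac n2$, one computes $p(\z)-p(\y)=\beta\big(\tfrac32 nW+\tfrac12(W^2-W-\tfrac n4)\big)$. This is at least $\tfrac32\beta nW$ once $W\ge\sqrt n$. So for large $n$ the claimed bound $\beta nT$ fails on the event $W>\tfrac23T$. That event has probability at least $n^{-4(k+1)/9-1/2-o(1)}$, far above $4/n^{k}$ for $k>2$.

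For $d\ge3$ your recursion does prove the statement if you sharpen the coefficient bounds at low degree. The induction in \cref{coro:polynomial-bound} works verbatim on $[0,1]^n$ and gives $|p_{I,j}(\y)|\le (m+1)\beta n^{m}$ when $p_{I,j}$ has degree $m$. Using $\min(m+1,2e)$ at each level, the constant becomes $1+\sum_{m=1}^{d-1}\min(m+1,2e)$. This equals $6$, $10$, $15$ for $d=3,4,5$ and $15+2e(d-5)$ beyond that, which is at most $1+2e(d-2)$ for every $d\ge3$.

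So you have two options. Either state the constant as $1+2e(d-1)$, which is what your argument (and the paper's) actually proves. Or use the sharper per-level bound and restrict the statement to $d\ge3$.
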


However, \cref{thm: randomized rounding lemma with McDiarmid} fails to leverage the fine-grained structure of the objective function. In particular, when the objective is \emph{multilinear}, it exhibits structural properties that facilitate a more effective rounding scheme.
\begin{definition}
A polynomial \(p\) is said to be \emph{multilinear} if it is affine with respect to any individual variable \(x_k\) when all other variables are held fixed:
\[
  p(x_1, \dots, x_k, \dots, x_n) = a\cdot x_k + b,
\]
where \(a\) and \(b\) are independent of \(x_k\).
\end{definition}
It is straightforward to verify that the objective functions of both \MAXCUT{} and \MAXKSAT{} are multilinear. Thus we can employ a greedy deterministic rounding strategy, which ensures that the objective value is non-decreasing.

\begin{restatable}{theorem}{RestateThmDeterministicRoundingMonotone}
\label{thm:deterministic-rounding-monotone}
Let \(p(\x)\) be a multilinear polynomial. For any fractional solution \(\y \in [0,1]^n\), the greedy deterministic rounding procedure yields an integral vector \(\z \in \{0,1\}^n\) such that:
\[
  p(\z) \geq p(\y).
\]
\end{restatable}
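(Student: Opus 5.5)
The plan is to define the greedy deterministic rounding explicitly and then show, by induction over the coordinates, that each step does not decrease the objective. The procedure processes the coordinates in a fixed order \(1,\dots,n\). It maintains a current vector \(\x^{(t)} \in [0,1]^n\), where \(\x^{(0)} = \y\) and the first \(t\) coordinates of \(\x^{(t)}\) are already integral. At step \(t\), we form the two candidates obtained from \(\x^{(t-1)}\) by setting coordinate \(t\) to \(0\) and to \(1\), leaving every other coordinate unchanged. We keep whichever candidate has the larger value of \(p\), breaking ties arbitrarily. The output is \(\z = \x^{(n)}\), which lies in \(\{0,1\}^n\). Each step needs only two evaluations of \(p\), so the procedure runs in polynomial time.

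The key step is the single-coordinate claim \(p(\x^{(t)}) \ge p(\x^{(t-1)})\). By multilinearity, once all coordinates other than \(x_t\) are fixed at their values in \(\x^{(t-1)}\), the restriction is \(g(x_t) = a\,x_t + b\). Here \(a\) and \(b\) depend only on those fixed values. An affine function on \([0,1]\) attains its maximum at an endpoint. Moreover, its value at the current point \(y = x^{(t-1)}_t \in [0,1]\) is a convex combination of the endpoint values:
\[
g(y) = (1-y)\,g(0) + y\,g(1) \le \max\{g(0), g(1)\}.
\]
The greedy choice picks exactly this maximum, so \(p(\x^{(t)}) = \max\{g(0),g(1)\} \ge g(y) = p(\x^{(t-1)})\). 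Chaining these inequalities for \(t = 1,\dots,n\) gives \(p(\z) \ge p(\y)\).

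The only subtle point, and the place I expect the main obstacle, is that the multilinearity definition is stated for arguments held fixed, while here some fixed coordinates are fractional. We therefore need \(p\) to be affine in \(x_t\) for arbitrary real values of the other coordinates, not only Boolean ones. I would handle this by reading multilinearity as a statement about the polynomial itself: no variable appears with degree greater than one in any monomial. Under that reading, collecting the monomials that contain \(x_t\) gives \(a\), and collecting the rest gives \(b\). Both are polynomials in the remaining variables, so the affine identity holds on all of \([0,1]^n\). With this justified, the rest of the argument is immediate.
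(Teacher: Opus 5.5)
Your proposal is correct and follows the paper's proof essentially step for step. You round coordinate by coordinate, and each step is non-decreasing because the restriction of a multilinear $p$ to one coordinate is affine, so it is maximized at an endpoint of $[0,1]$. Your observation that multilinearity must hold as a polynomial identity, so that the other coordinates may take fractional values, is a useful clarification that the paper leaves implicit.
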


\begin{proof}[Proofsketch]
    With the multilinearity of \(p(\x)\), we observe that the objective function is affine with respect to each variable. Consequently, making a greedy choice at each step guarantees that the objective value is non-decreasing.
\end{proof}

\subsubsection{Overall Guarantee}
We now synthesize the relaxation gap analysis with the rounding error bounds to derive a comprehensive approximation guarantee. This result extends the quadratic bound presented in \cref{gap-between-approximated-solution-and-opt-solution} to polynomials of arbitrary degree \(d\).

\begin{theorem}\label{thm:general-optimization-gap}
Let \(p(\x)\) be an \(n\)-variate, degree-\(d\), \(\beta\)-smooth polynomial. Let \(\y\) denote the optimal solution to the relaxed problem \cref{optimization: d-LP-oracle}, and let \(\varepsilon = \|\x^* - \hat{\x}\|_1\) quantify the \(L_1\) error of the oracle prediction. Define the constant \(\eta = 2e(d - 2) + 1\).
With probability at least \(1 - 2d/n^{k - d + 2}\), the solution \(\z\) obtained via randomized rounding satisfies:
    \begin{equation}\label{thm:general-optimization-gap-ineq-1}
      p(\z) \ge p(\x^*) - 2\eta \beta n^{d - 1/2} \sqrt{\varepsilon} - \eta \beta n^{d-1} \sqrt{\frac{k+1}{2}} \sqrt{n\ln n}.
    \end{equation}
If \(p(\x)\) is multilinear and \(\z\) is obtained via the deterministic rounding strategy, then:
    \begin{equation}\label{thm:general-optimization-gap-ineq-2}
      p(\z) \ge p(\x^*) - 2\eta \beta n^{d - 1/2} \sqrt{\varepsilon}.
    \end{equation}

\end{theorem}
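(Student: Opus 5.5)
The plan is to combine the two preceding ingredients directly: the relaxation gap of \cref{thm:relaxation-gap} and one of the rounding guarantees. Write
\[
p(\z) - p(\x^*) = \bigl(p(\z) - p(\y)\bigr) + \bigl(p(\y) - p(\x^*)\bigr).
\]
The second bracket is handled deterministically by \cref{thm:relaxation-gap}. With \(\eta = 2e(d-2)+1\), that theorem gives \(p(\y) - p(\x^*) \ge -2\eta\beta n^{d-1/2}\sqrt{\varepsilon}\). This holds for the optimal LP solution \(\y\) whenever the tolerances \(\delta_I\) are set as in \cref{sec: oracle-guided-relaxation-general}, so that \(\x^*\) is feasible for \cref{optimization: d-LP-oracle}. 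Since \(\varepsilon\) is unknown, I would note, as in the quadratic case, that we enumerate the \(n+1\) integer values of \(\varepsilon\). The run with the correct value produces a \(\y\) satisfying the bound. Picking the best rounded output over all runs can only improve \(p(\z)\), so it suffices to analyze that one run.

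For \eqref{thm:general-optimization-gap-ineq-1}, I will invoke \cref{thm: randomized rounding lemma with McDiarmid} on \(\y\). It gives \(|p(\y)-p(\z)| \le \eta\beta n^{d-1}\sqrt{(k+1)/2}\sqrt{n\ln n}\) with probability at least \(1 - 2d/n^{k+1-(d-1)} = 1 - 2d/n^{k-d+2}\), which is exactly the stated probability. On this event, adding the two brackets yields \eqref{thm:general-optimization-gap-ineq-1}. Only the randomness of the rounding is involved, because \(\y\) is fixed once the LP is solved, so no union bound is needed.

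One bookkeeping point needs care. If we round in every one of the \(n+1\) enumerated runs and take the best, we only need the good event in the correct run, so the probability is unaffected. The main thing to double-check is the range of \(k\): the rounding theorem is stated for \(k>d\), and the statement inherits this implicitly. I would add that assumption explicitly.

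For \eqref{thm:general-optimization-gap-ineq-2}, multilinearity lets \cref{thm:deterministic-rounding-monotone} replace the probabilistic step. It gives \(p(\z)\ge p(\y)\) deterministically, so the rounding term vanishes. Combining with \cref{thm:relaxation-gap} immediately gives \(p(\z)\ge p(\x^*) - 2\eta\beta n^{d-1/2}\sqrt{\varepsilon}\), with probability one.

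I expect no real obstacle, since everything reduces to earlier results. The only subtle step is justifying that the enumeration over \(\varepsilon\) and the selection of the best candidate preserve both guarantees. This requires that \(p\) can be evaluated exactly on each integral candidate \(\z\), which is true because \(p\) is given explicitly.
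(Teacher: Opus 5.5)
Your proposal is correct and follows the paper's proof: you split $p(\z)-p(\x^*)$ into the relaxation gap, bounded by \cref{thm:relaxation-gap}, and the rounding loss. The rounding loss is bounded with the stated probability by \cref{thm: randomized rounding lemma with McDiarmid}, and in the multilinear case it is eliminated by \cref{thm:deterministic-rounding-monotone}. Your extra bookkeeping is sound and spells out what the paper leaves implicit: the enumeration over $\varepsilon$ only needs the good event in the run with the true $\varepsilon$, and the hypothesis $k>d$ inherited from the rounding theorem should be stated explicitly.
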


\begin{proof}
The proof decomposes the total approximation error into the relaxation gap and the rounding loss.
First, \cref{thm:relaxation-gap} bounds the relaxation gap as \(p(\y) \ge p(\x^*) - 2\eta \beta n^{d - 1/2} \sqrt{\varepsilon}\).
The inequality \cref{thm:general-optimization-gap-ineq-1} follows by combining this result with the high-probability rounding error bound established in \cref{thm: randomized rounding lemma with McDiarmid}.
For \cref{thm:general-optimization-gap-ineq-2}, when \(p(\x)\) is multilinear, \cref{thm:deterministic-rounding-monotone} guarantees monotonic improvement, i.e., \(p(\z) \ge p(\y)\). This property eliminates the rounding error term, yielding the tighter bound in \cref{thm:general-optimization-gap-ineq-2}.
\end{proof}

\subsection{Algorithmic Framework}
A complete optimization protocol is summarized in \cref{algorithm: pipeline}. The algorithm first invokes the learning oracle to obtain a prediction, constructs and solves the linear relaxation, and finally maps the fractional solution to a valid integer assignment using the specified rounding strategy.

\begin{algorithm}[t]
  \caption{Learning-augmented optimization framework}
  \label{algorithm: pipeline}
  \DontPrintSemicolon
  \SetKwInOut{Input}{Input}
  \SetKwInOut{Output}{Output}
  
  \Input{Problem instance \(\pi \in \Pi\), rounding \(\texttt{strategy}\)}
  \Output{Integer solution \(\z^*\)}
  
  Obtain oracle prediction \(\hat{\x} \leftarrow f(\pi)\)\;
  
  \For{\(\varepsilon \in \{0, 1, \dots, n\}\)}{
    \cref{optimization: d-LP-oracle} \(\gets \textsc{Relax}(p, \hat{\x}, \varepsilon)\) \tcp*{see \cref{algorithm: relaxation}}
    Solve \cref{optimization: d-LP-oracle} to obtain fractional solution \(\y\)\;
    
    \eIf{\texttt{strategy} is deterministic}{
      Obtain \(\z\) using the deterministic greedy strategy\;
    }{
      Obtain \(\z\) using randomized rounding\;
    }
    Yield \(\z\) and \(p(\z)\)\;
  }
  \Return best performing {\(\z^*\)}
\end{algorithm}

\paragraph{Complexity Analysis.}
Although the exact prediction error \(\varepsilon = \|\hat{\x} - \x^*\|_1\) is typically unknown, its discrete nature (ranging over \(\{0, \dots, n\}\)) facilitates an exhaustive search. By enumerating all feasible values of \(\varepsilon\), we identify the solution maximizing the objective function. The overall computational complexity is dominated by the linear programming steps, totaling \(O(n \cdot T_{\mathtt{LP}})\), where \(T_{\mathtt{LP}}\) denotes the time complexity of the LP solver~(one of the best result is approximately \(O(n^{2.38})\), \citealt{cohen2021solving}). The rounding procedure, operating in linear time, contributes negligibly.

Notably, the proposed framework exhibits three key properties of learning-augmented algorithms~\citep{mitzenmacher2022algorithms}: 
(i) \emph{Consistency}: With perfect prediction (i.e., \(\hat{\x} = \x^*\)), the algorithm recovers the optimal solution (up to rounding errors);
(ii) \emph{Smoothness}: The approximation ratio degrades gracefully with prediction error \(\varepsilon\), maintaining reliability;
(iii) \emph{Robustness}: By running in parallel with the best worst-case algorithm, performance never falls below the standard baseline.




\section{Learnability of Oracles}\label{sec:learnability}
\Cref{thm:general-optimization-gap} establishes that a small prediction error \(\varepsilon\) suffices to guarantee good performance. This naturally motivates a fundamental inquiry:
\begin{quote}
    \centering
    \emph{How can such an oracle be acquired}?
\end{quote}
In this section, following \cite{Gupta2017PACApproach}, we address this question by establishing a statistical learnability guarantee: under the standard assumption that the complexity of the hypothesis space is controlled by a bounded VC-dimension (e.g., for neural networks of bounded size~\citealt{bartlett2019nearly}), an oracle is \emph{PAC-learnable}.

\subsection{Setup}

Let \(\Pi\) denote the instance space, where each instance \(\pi \in \Pi\) represents a discrete optimization problem as defined in \cref{optimization: d-IP}. 
Without loss of generality, we assume all instances have the same fixed size. 
Let \(\mathcal{D}\) be an unknown but fixed distribution over \(\Pi\).
Let \(\mathcal{F}\) denote the hypothesis space of functions \(f: \Pi \to \{0,1\}^n\), serving as the set of candidate oracles. 
Each \(f \in \mathcal{F}\) induces an algorithm \(\mathcal{A}_f\) that utilizes \(f(\pi)\) to generate a candidate solution \(\hat{\x}(\pi)\), which is then fed into \cref{algorithm: pipeline} to produce the final solution \(\z_f(\pi)\).
Let \(\x^*(\pi)\) denote the optimal solution to instance \(\pi\).
Let \(\mathcal{A} = \{\mathcal{A}_f \mid f \in \mathcal{F}\}\) denote the induced class of algorithms.

We evaluate the performance of an algorithm via a cost function \(\mathtt{COST}: \mathcal{A} \times \Pi \to [0, H]\), defined by:
\[
    \mathtt{COST}(\mathcal{A}_f, \pi) \triangleq H - p(\z_f(\pi)),
\]
where \(p(\cdot)\) is the objective function and \(H\) is a uniform upper bound on the optimal objective value (e.g., the total number of clauses in \MAXSAT{}). 
Define the expected objective value attained by an oracle as \(\mathcal{P}(f) \triangleq \Exp_{\pi \sim \mathcal{D}}[p(\z_f(\pi))]\), and define the optimal expected objective as \(\mathcal{P}^* \triangleq \Exp_{\pi \sim \mathcal{D}}[p(\x^*(\pi))]\).

We adopt the standard PAC framework. The expected risk of an algorithm \(\mathcal{A}_f\) is given by \(R(f) \triangleq \Exp_{\pi \sim \mathcal{D}}[\mathtt{COST}(\mathcal{A}_f, \pi)]\). Accordingly, let \(f^*_{\mathtt{cost}} \in \argmin_{f \in \mathcal{F}} R(f)\) denote the optimal oracle within the hypothesis class. We define the excess risk (error) of any candidate \(f\) as 
\[\textrm{error}(f) \triangleq R(f) - R(f^*_{\mathtt{cost}}).\]

Given a training set \(S = \{\pi_1, \ldots, \pi_m\}\) drawn i.i.d. from \(\mathcal{D}\), our goal is to identify a hypothesis \(\hat{f} \in \mathcal{F}\) such that \(\textrm{error}(\hat{f})\) is small. We employ the empirical risk minimization~(ERM) principle, which selects the hypothesis minimizing the empirical risk on \(S\):
\[
    \hat{f}_{\mathrm{ERM}} \in \argmin_{f \in \mathcal{F}} \frac{1}{m} \sum_{i=1}^m \mathtt{COST}(\mathcal{A}_f, \pi_i).
\]

\subsection{ERM Guarantees}
We show that ERM learns, with high probability, a near-optimal oracle within \(\mathcal{F}\), and then connect this statistical guarantee to \cref{thm:general-optimization-gap} to obtain an end-to-end bound on the expected optimization performance.

Recall from \cref{thm:general-optimization-gap} that the approximation gap of \(\mathcal{A}_f\) on an instance \(\pi\) is controlled by the prediction error of the oracle \(f\). Accordingly, we consider the expected prediction error
\[
    \mathcal{E}(f) \triangleq \Exp_{\pi \sim \mathcal{D}}\big[\|f(\pi) - \x^*(\pi)\|_1\big].
\]
Let \(f^*_{\mathrm{pred}} \in \argmin_{f \in \mathcal{F}} \mathcal{E}(f)\) be a minimizer, and define \(\varepsilon_{\mathcal{F}} \triangleq \mathcal{E}(f^*_{\mathrm{pred}})\), which quantifies the statistical realizability of the hypothesis space \(\mathcal{F}\).

Although a low cost does not, in general, imply a small prediction error---and hence does not immediately translate into a bound on the approximation gap in \cref{thm:general-optimization-gap}---the cost-optimal oracle still admits a guarantee comparable to that of the prediction-error minimizer, as formalized below.

\begin{restatable}{proposition}{RestatePropOptimalityTransfer}\label{prop:optimality-transfer}
    The expected objective value attained by the cost-optimal oracle \(f^*_{\mathtt{cost}}\) satisfies:
    \[
        \mathcal{P}(f^*_{\mathtt{cost}}) \geq \mathcal{P}^* - \widetilde{O}\left(n^{d - 1/2}\sqrt{\varepsilon_{\mathcal{F}}}\right).
    \]
\end{restatable}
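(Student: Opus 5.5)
The plan is to compare $f^*_{\mathtt{cost}}$ with the prediction-error minimizer $f^*_{\mathrm{pred}}$, and then average the per-instance guarantee of \cref{thm:general-optimization-gap} over $\mathcal{D}$.

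\emph{Step 1: transfer from cost-optimality to $f^*_{\mathrm{pred}}$.} By definition $f^*_{\mathtt{cost}}$ minimizes $R(f)=H-\mathcal{P}(f)$ over $\mathcal{F}$. Since $f^*_{\mathrm{pred}}\in\mathcal{F}$, we get $R(f^*_{\mathtt{cost}})\le R(f^*_{\mathrm{pred}})$, i.e.\ $\mathcal{P}(f^*_{\mathtt{cost}})\ge \mathcal{P}(f^*_{\mathrm{pred}})$. It therefore suffices to lower bound $\mathcal{P}(f^*_{\mathrm{pred}})$.

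\emph{Step 2: per-instance bound.} Fix $\pi$ and write $\varepsilon_\pi=\|f^*_{\mathrm{pred}}(\pi)-\x^*(\pi)\|_1$. \Cref{algorithm: pipeline} enumerates all $\varepsilon\in\{0,\dots,n\}$ and returns the best candidate, so it in particular tries the correct value $\varepsilon_\pi$ and we may apply \cref{thm:general-optimization-gap} with that value.

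In the multilinear/deterministic case, \cref{thm:general-optimization-gap-ineq-2} gives
\[
p(\z(\pi))\ge p(\x^*(\pi))-2\eta\beta n^{d-1/2}\sqrt{\varepsilon_\pi}
\]
deterministically.

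In the randomized case, the bound \cref{thm:general-optimization-gap-ineq-1} holds with probability at least $1-2d/n^{k-d+2}$. On the failure event, \cref{lemma: polynomial-bound-arora} gives $p(\z)\ge -2\beta e n^d$ and $p(\x^*)\le 2\beta e n^d$, so the loss is at most $4\beta e n^d$. Choosing $k=d+1$ makes the failure probability $O(n^{-3})$, and the expected failure contribution is then $O(n^{d-3})$, which is negligible. Taking expectation over the rounding then gives
\[
\Exp[p(\z(\pi))]\ge p(\x^*(\pi))-2\eta\beta n^{d-1/2}\sqrt{\varepsilon_\pi}-O\big(n^{d-1/2}\sqrt{\ln n}\big).
\]

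\emph{Step 3: average over $\mathcal{D}$.} Take $\Exp_{\pi\sim\mathcal{D}}$ of the Step 2 bound. Since $\sqrt{\cdot}$ is concave, Jensen's inequality gives $\Exp[\sqrt{\varepsilon_\pi}]\le\sqrt{\Exp[\varepsilon_\pi]}=\sqrt{\varepsilon_{\mathcal{F}}}$. This yields
\[
\mathcal{P}(f^*_{\mathrm{pred}})\ge\mathcal{P}^*-2\eta\beta n^{d-1/2}\sqrt{\varepsilon_{\mathcal{F}}}-\tilde O(n^{d-1/2}),
\]
and combining with Step 1 gives the claim.

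\emph{Main obstacle.} The delicate point is the additive rounding term $\tilde O(n^{d-1/2})$, which does not scale with $\sqrt{\varepsilon_{\mathcal{F}}}$. I would handle it in two ways. First, interpret the $\widetilde O(\cdot)$ in the statement as absorbing it (using $\sqrt{\varepsilon_{\mathcal{F}}}\ge 1$ unless $\varepsilon_{\mathcal{F}}=0$), or equivalently state the bound as $\widetilde O(n^{d-1/2}(\sqrt{\varepsilon_{\mathcal{F}}}+1))$. Second, note that with deterministic rounding on multilinear objectives the term vanishes entirely.

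Two measurability checks remain. The returned best-over-enumeration solution must dominate the candidate at the true $\varepsilon_\pi$; it does, because $p(\z^*)\ge p(\z_{\varepsilon_\pi})$ pointwise. Integrability is immediate because $p$ is bounded on $[0,1]^n$.
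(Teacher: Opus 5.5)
Your proposal is correct and follows the paper's proof essentially verbatim: cost-optimality gives $\mathcal{P}(f^*_{\mathtt{cost}})\ge\mathcal{P}(f^*_{\mathrm{pred}})$, and then \cref{thm:general-optimization-gap} is averaged over $\mathcal{D}$ with Jensen applied to $\sqrt{\cdot}$. You are also more careful than the paper about the rounding failure event and the best-over-enumeration step.

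One correction to your ``main obstacle'' remark: $\varepsilon_{\mathcal{F}}$ is an expectation and can lie in $(0,1)$, so $\sqrt{\varepsilon_{\mathcal{F}}}\ge 1$ is not available, and even at $\varepsilon_{\mathcal{F}}=0$ the randomized rounding loss remains. What you have honestly proved is therefore your $\widetilde{O}\big(n^{d-1/2}(\sqrt{\varepsilon_{\mathcal{F}}}+1)\big)$ form. That is also exactly what the paper's argument yields before it silently folds the additive term into the $\widetilde{O}$.
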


In particular, if \(\mathcal{F}\) is sufficiently rich to contain a good predictor (i.e., \(\varepsilon_{\mathcal{F}}\) is small), then the cost-minimizing oracle is guaranteed to yield a small expected approximation gap.

Next, we establish that an oracle learned by ERM is near-optimal once the sample size is sufficiently large. To state a uniform convergence bound, we characterize the complexity of the induced algorithm class \(\mathcal{A}\). Since the cost is real-valued, we use the \emph{pseudo-dimension}~\cite{pollard1990empirical}, a standard extension of VC-dimension. The following lemma shows that the complexity of \(\mathcal{A}\) is controlled by the VC-dimensions of the coordinate classes comprising \(\mathcal{F}\).

\begin{restatable}{theorem}{RestateThmPseudoDimension}\label{thm:pseudo-dimension-of-algorithm-class}
    Let \(\mathcal{F}\) be a hypothesis class of predictors \(f: \Pi \to \{0,1\}^n\) where each coordinate function class \(\mathcal{F}_i\) has VC-dimension \(\mathrm{VCdim}(\mathcal{F}_i)\). Let \(d_{\mathcal{F}} := \sum_{i=1}^n \mathrm{VCdim}(\mathcal{F}_i)\).
    Let \(\mathcal{A} = \{ \mathcal{A}_f : f \in \mathcal{F} \}\) be the class of algorithms, where each \(\mathcal{A}_f\) predicts \(\hat{\x} = f(\pi)\) and subsequently executes \cref{algorithm: pipeline}.
    Then, the pseudo-dimension of the cost functions induced by \(\mathcal{A}\) satisfies
    \[
        \mathrm{Pdim}(\mathcal{A}) \le C\, d_{\mathcal{F}} \log(e\, d_{\mathcal{F}})
    \]
    for some absolute constant \(C\).
\end{restatable}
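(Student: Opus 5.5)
Observe first that, for a fixed instance \(\pi\), the whole pipeline of \cref{algorithm: pipeline} is a deterministic function of the prediction \(\hat{\x}=f(\pi)\) alone. We take the deterministic greedy rounding, or fix the random seed of the randomized rounding as part of the instance, so that the cost is a function. We also fix a deterministic tie-breaking rule for the LP solver. Hence \(\mathtt{COST}(\mathcal{A}_f,\pi)=g_\pi(f(\pi))\) for some fixed map \(g_\pi:\{0,1\}^n\to[0,H]\), and all dependence on \(f\) enters through the binary vector \(f(\pi)=(f_1(\pi),\dots,f_n(\pi))\). The plan is therefore to bound the number of distinct behaviours \(f\) can exhibit on a sample. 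We then use the standard counting argument that turns a growth-function bound into a pseudo-dimension bound.

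Concretely, suppose \(\mathcal{A}\) pseudo-shatters \(m\) instances \(\pi_1,\dots,\pi_m\) with witnesses \(r_1,\dots,r_m\). Then all \(2^m\) sign patterns \(\bigl(\mathbf{1}[g_{\pi_j}(f(\pi_j))\ge r_j]\bigr)_{j\le m}\) must be realized as \(f\) ranges over \(\mathcal{F}\). Each pattern is determined by the tuple \((f(\pi_1),\dots,f(\pi_m))\in(\{0,1\}^n)^m\). Viewing \(f\) as the product \(f_1\times\cdots\times f_n\) with \(f_i\in\mathcal{F}_i\), the number of such tuples is at most \(\prod_{i=1}^n \Pi_{\mathcal{F}_i}(m)\), where \(\Pi_{\mathcal{F}_i}\) is the growth function. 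By Sauer--Shelah, \(\Pi_{\mathcal{F}_i}(m)\le (em/d_i)^{d_i}\) with \(d_i=\mathrm{VCdim}(\mathcal{F}_i)\) (and the factor is \(1\) if \(d_i=0\)). Using \(\log(em/d_i)\le \log(em)\), the total count is at most \((em)^{d_{\mathcal{F}}}\). We do not need the sharper entropy-based bound.

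Shattering thus forces \(2^m\le (em)^{d_{\mathcal{F}}}\), that is, \(m\le d_{\mathcal{F}}\log_2(em)\). The standard lemma says that \(m\le A\log_2(em)\) implies \(m\le C A\log(eA)\) for an absolute constant \(C\). Concretely, try \(m= c A\log(eA)\) and check that the inequality fails for large enough \(c\). This yields \(\mathrm{Pdim}(\mathcal{A})\le C\,d_{\mathcal{F}}\log(e\,d_{\mathcal{F}})\).

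The main obstacle is the first step: justifying that the cost depends on \(f\) only through \(f(\pi)\). This requires a well-defined, deterministic output of \cref{algorithm: pipeline}, including the enumeration over \(\varepsilon\), the LP tie-breaking and the rounding. Once that is fixed, the rest is routine product-class counting. No structural property of the LP or of smoothness is needed.
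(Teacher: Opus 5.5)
Your proposal is correct and follows essentially the same route as the paper. You fold the random seed into the instance so that the cost depends on \(f\) only through \(f(\pi)\), bound the number of realizable prediction matrices by the product of coordinate-wise Sauer--Shelah bounds to get \((em)^{d_{\mathcal{F}}}\), and conclude from \(2^m \le (em)^{d_{\mathcal{F}}}\). Your explicit fixing of an LP tie-breaking rule is a detail the paper leaves implicit, and the product bound remains valid even when \(\mathcal{F}\) is a coupled (non-product) class, since it embeds in \(\mathcal{F}_1\times\cdots\times\mathcal{F}_n\).
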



Leveraging standard uniform convergence results~\cite{anthony1999neural} of the pseudo-dimension and further combine \cref{prop:optimality-transfer}, we obtain the following comprehensive learning guarantee.

\begin{restatable}{theorem}{RestateThmErmLearnability}\label{thm:erm-learnability}
    Let \(\mathcal{F}\) be a hypothesis space as described in \cref{thm:pseudo-dimension-of-algorithm-class}.
    For any \(\epsilon > 0\) and \(\delta \in (0,1]\), if the sample size \(m\) satisfies
    \[
        m \geq C \left(\frac{H}{\epsilon}\right)^2 \left(d_{\mathcal{F}}\log (e d_{\mathcal{F}}) + \log\left(\frac{1}{\delta}\right)\right)
    \]
    for some absolute constant \(C\), then with probability at least \(1 - \delta\), any empirical risk minimizer \(\hat{f}\) achieves an excess risk \(\textrm{error}(\hat{f}) \leq 2\epsilon\), and its expected objective value satisfies
    \[
        \mathcal{P}(\hat{f}) \geq \mathcal{P}^* - \widetilde{O}\left(n^{d - 1/2}\sqrt{\varepsilon_{\mathcal{F}}}\right) - 2\epsilon.
    \]
\end{restatable}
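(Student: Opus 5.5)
The argument has two parts: a uniform-convergence statement for the cost class, and a transfer from excess risk to expected objective value. For the first part, we invoke \cref{thm:pseudo-dimension-of-algorithm-class}, which gives $\mathrm{Pdim}(\mathcal{A}) \le C' d_{\mathcal{F}}\log(e d_{\mathcal{F}})$. Since $\mathtt{COST}$ takes values in $[0,H]$, the standard uniform convergence theorem for real-valued classes of bounded pseudo-dimension \cite{anthony1999neural} applies. Rescaling the costs to $[0,1]$ and using accuracy $\epsilon/H$, it shows that for $m \ge C (H/\epsilon)^2 (\mathrm{Pdim}(\mathcal{A}) + \log(1/\delta))$, with probability at least $1-\delta$,
\[
\sup_{f\in\mathcal{F}} \Bigl| \tfrac{1}{m}\textstyle\sum_{i} \mathtt{COST}(\mathcal{A}_f,\pi_i) - R(f)\Bigr| \le \epsilon .
\]
We absorb $C'$ into the absolute constant $C$ to obtain the stated sample size.

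On this event, the usual ERM chain applies. Write $\hat R$ for the empirical risk and let $\hat f$ be any empirical risk minimizer. Then
\[
R(\hat f) \le \hat R(\hat f) + \epsilon \le \hat R(f^*_{\mathtt{cost}}) + \epsilon \le R(f^*_{\mathtt{cost}}) + 2\epsilon,
\]
which gives $\textrm{error}(\hat f) \le 2\epsilon$. This relies on the minimizer $f^*_{\mathtt{cost}}$ existing. If it does not, we replace it by a near-minimizer and let the slack go to zero.

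For the second part, note that $R(f) = H - \mathcal{P}(f)$, so the excess-risk bound is equivalent to $\mathcal{P}(\hat f) \ge \mathcal{P}(f^*_{\mathtt{cost}}) - 2\epsilon$. Combining this with \cref{prop:optimality-transfer}, which gives $\mathcal{P}(f^*_{\mathtt{cost}}) \ge \mathcal{P}^* - \widetilde{O}(n^{d-1/2}\sqrt{\varepsilon_{\mathcal{F}}})$, yields the claimed lower bound on $\mathcal{P}(\hat f)$.

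The main obstacle is making sure the randomness of the rounding step does not break the uniform convergence argument. If \cref{algorithm: pipeline} uses randomized rounding, $\mathtt{COST}(\mathcal{A}_f,\pi)$ is a random variable rather than a fixed function of $\pi$. We handle this in one of two ways, consistent with how \cref{thm:pseudo-dimension-of-algorithm-class} treats the class:
\begin{itemize}
\item take the deterministic (greedy or derandomized) rounding, so that each $\mathcal{A}_f$ is a deterministic map $\pi \mapsto \z_f(\pi)$;
\item fix the random seed as part of the instance, so that $\mathcal{D}$ is extended to pairs (instance, seed) and the argument goes through unchanged.
\end{itemize}
With either choice, the cost class is a genuine class of $[0,H]$-valued functions on a sample space, and the uniform convergence theorem applies directly. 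Everything else is routine bookkeeping of constants.
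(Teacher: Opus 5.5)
Your proposal is correct and follows the paper's proof. The paper packages your uniform-convergence-plus-ERM chain as \cref{coro:erm-learning} (with the pseudo-dimension bound of \cref{thm:pseudo-dimension-of-algorithm-class} supplying $d_{\mathcal{A}}$), turns $\mathrm{error}(\hat f)\le 2\epsilon$ into $\mathcal{P}(\hat f)\ge\mathcal{P}(f^*_{\mathtt{cost}})-2\epsilon$ via $\mathtt{COST}=H-p$, and concludes with \cref{prop:optimality-transfer}; your treatment of the rounding randomness, folding the seed into the sample, is how the paper handles it in the proof of \cref{thm:pseudo-dimension-of-algorithm-class}.
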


We have thus established the feasibility of acquiring an oracle for our learning-augmented optimization framework, demonstrating that the sample complexity admits a standard PAC bound of order \(\text{polylog}(1/\epsilon, \log(1/\delta), d_{\mathcal{F}})\).

It is worth noting that this result guarantees that ERM identifies an oracle \(\hat{f}\) whose performance is near-optimal \emph{in expectation} (i.e., in the average case under \(\mathcal{D}\)), rather than ensuring pointwise performance guarantees on every individual instance.
Furthermore, while our analysis focuses on statistical learnability, the computational efficiency of ERM remains an open problem for future investigation~\cite{Gupta2017PACApproach, bartlett2022generalization}.
\section{Applications}\label{sec:applications}

In this section, we demonstrate the efficacy of our framework by applying it to two canonical \NPhard{} problems: \MAXCUT{} and \MAXKSAT{}. These problems serve as representative benchmarks for quadratic and higher-order smooth polynomial optimization, respectively. 

\paragraph{\MAXCUT{}}
As formalized in \cref{example: maxcut}, \MAXCUT{} involves maximizing a quadratic objective function. Since this objective is \(2\)-smooth, our framework is directly applicable. Consider \emph{near-dense} instances characterized by an average vertex degree of \(\Omega(n^{0.5 + \xi})\) for some \(\xi \in (0, 0.5]\); this is equivalent to assuming \(\OPT \ge \kappa\cdot n^{1.5 + \xi}\) for a constant \(\kappa > 0\). Under these conditions, our main result guarantees an approximation ratio of
\[
    1 - \frac{4}{\kappa}\, \sqrt{\varepsilon} / n^\xi.
\]
Consequently, for near-dense graphs, a sufficiently accurate oracle yields solutions that are provably near-optimal.

\paragraph{\MAXKSAT{}}
We next address problems with higher-order constraints through \MAXKSAT{}, which optimizes a degree-\(k\) multilinear polynomial (see \cref{example: maxksat}). The corresponding objective is \(\beta\)-smooth with \(\beta \le 4^k\). Specifically, for instances possessing \(\Omega(n^{k - 0.5 + \xi})\) constraints (clauses)—or equivalently, where \(\OPT \ge \kappa\cdot n^{k - 0.5 + \xi}\) for some constant \(\kappa > 0\)—the approximation ratio is given by
\[
    1 - \frac{2(2e(k-2)+1)\beta}{\kappa} \sqrt{\varepsilon}/n^\xi.
\]
In the special case of \(k = 3\), this simplifies to \(1 - \tfrac{128(2e + 1)}{\kappa} \sqrt{\varepsilon}/n^\xi\). These results illustrate the framework's capability to effectively accommodate complex, high-degree dependencies.

\paragraph{Oracle Instantiation and Learnability}
For both \MAXCUT{} and \MAXKSAT{}, the predictive oracle \(f\) can be parameterized using Graph Neural Networks (GNNs)~\cite{Selsam2019LBLMD, Gasse2019ML4co} or Transformers~\cite{pan2025can}. Both architectures are known to possess bounded VC-dimensions~\cite{bartlett2019nearly}.
Therefore, \cref{thm:erm-learnability} guarantees that an oracle with near-optimal expected performance can be learned from a polynomial number of samples, thereby establishing the statistical feasibility.

Finally, we note that our framework can be extended to other problems, particularly \MAXKCSP{}, a generalization of \MAXKSAT{} that aims to satisfy the maximum number of Boolean constraints. We refer the details of this extension to \cref{sec:maxcsp-smoothness} in the appendix.
\section{Concluding Remarks}\label{sec:conclusion}
This work uses smooth integer programs as a testbed for learning-augmented discrete optimization.
We extend the setting of \citet{bampis24PLA} by transitioning from a \emph{parsimonious} oracle to a \emph{full information} oracle.
By doing so, we aim to understand when and how machine learning models can be combined with classical approximation algorithms in a principled manner.

The results suggest that suitably structured predictions can be injected into the optimization pipeline in a controlled way, leading to algorithms whose behavior varies continuously with oracle quality and whose approximation gaps can be rigorously characterized.
At the same time, the oracle itself need not be treated as a black box: under mild complexity assumptions, it is PAC-learnable via empirical risk minimization, so that statistical and algorithmic considerations can be aligned rather than treated in isolation.

Naturally, the present work remains only a first step.
Our results currently rely on smooth polynomial objectives within near-dense regimes, focusing primarily on average-case learnability rather than the computational efficiency of the training procedures.
Relaxing these assumptions, analyzing the impact of the hypothesis space, delving deeper into the dynamics of the optimization algorithms, and exploring richer forms of interaction between the oracle and the optimization algorithm are all interesting directions for future investigation.

\printbibliography
\clearpage
\appendix
\section*{\centering \LARGE Appendix}
The organization of this appendix is as follows:
\begin{itemize}
    \item \Cref{sec:related-works} presents a comprehensive review of related literature.
    \item \Cref{app:omitted-algorithms} elaborates on the specific subroutines utilized in \cref{algorithm: pipeline}.
    \item \Cref{app:omitted-proofs,sec:omitted-proofs-learnability} provide technical details and proofs omitted from the main text.
    \item \Cref{sec:extend-constraints} generalizes \cref{optimization: d-IP} to accommodate general polynomial constraints.
    \item \Cref{sec:maxcsp-smoothness} demonstrates the applicability to the \MAXKCSP{} problem, focusing on the smoothness condition.
\end{itemize}

\section{Related Work}\label{sec:related-works}
This section provides a comprehensive review of the literature pertinent to this work.
We begin by revisiting classical results in approximation algorithms, followed by a detailed survey of recent advancements in learning-augmented algorithms and data-driven algorithm selection.
Finally, we discuss the specific problem of smooth integer programming, which forms the theoretical foundation of this work.

\subsection{Approximation Algorithms}
Approximation algorithms represent a canonical paradigm for addressing computationally intractable optimization problems, typically offering polynomial-time solvability with provable guarantees on solution quality.
The efficacy of such an algorithm is quantified by its \emph{approximation ratio}, defined as the worst-case ratio between the objective value it obtains and that of an optimal solution.

A prominent class of such algorithms is the Polynomial-Time Approximation Scheme (PTAS).
Given a problem instance \(\pi\) and a precision parameter \(\epsilon \in (0, 1]\), a PTAS delivers a solution with an objective value within a factor of \(1 - \epsilon\) of the optimum for maximization problems (or \(1+\epsilon\) for minimization).
Its running time is polynomial in the problem size for any fixed \(\epsilon > 0\), although it may be exponential in \(1/\epsilon\) (e.g., \(O(n^{1/\epsilon})\)).
A PTAS thus facilitates a trade-off between approximation accuracy and computational complexity.
However, seminal results in computational complexity~\cite{Papadimitriou1991APX,Arora1998Hardness} indicate that unless \(\P{} = \NP{}\), many fundamental \NPhard{} problems—such as Vertex Cover, \MAXKSAT{}, and \MAXCUT{}—do not admit a PTAS.
Consequently, PTASs are generally attainable only for specific subclasses of \NPhard{} problems, such as the Knapsack~\citep{Sahni1975Knapsack} and Bin Packing~\citep{fernandez1981bin} problems.

\subsection{Learning-augmented algorithms}

The field of learning-augmented algorithms~(LAAs)\footnote{A comprehensive repository of LAAs is maintained at \url{https://algorithms-with-predictions.github.io/}.}, alternatively referred to as ``algorithms with predictions'' or ``data-driven algorithms,'' has emerged as a vibrant research area at the intersection of algorithm design, optimization, and machine learning. 

Pioneered by \citet{Lykouris2021LearningAugmented} in their seminal work on caching~\cite{Lykouris2018LearningAugmented}, this framework challenges traditional worst-case analysis by leveraging machine-learned predictions to enhance algorithmic performance in practical scenarios. The primary objective is to design algorithms that effectively integrate such predictions while satisfying three key properties~\cite{mitzenmacher2022algorithms}: 
\begin{itemize}
    \item \emph{consistency}, ensuring near-optimal performance when predictions are accurate; 
    \item \emph{smoothness}, guaranteeing that performance degrades gracefully with prediction error; and 
    \item \emph{robustness}, maintaining guarantees comparable to traditional baselines, even when predictions are arbitrarily poor. 
\end{itemize}
This research direction has attracted significant attention, yielding diverse results in areas including scheduling~\cite{Kumar2018ImprovingOnlineAlgorithms, Mitzenmacher2020SchedulingWithPredictions, shahout2024scheduling}, matching~\cite{Dinitz2021Matching}, sorting~\cite{Bai2023Sorting}, clustering~\cite{ergun2022learningaugmented}, indexing~\cite{Kraska2018LearnedIndex}, branch-and-bound~\cite{Balcan2018LearningToBranch}, shortest paths~\cite{Lattanzi2023Speeding}, paging~\cite{Antoniadis2023LAA-Paging}, and caching~\cite{Im2022Parsimonious}.

In the context of \MAXCUT{}, \citet{Cohen2024MaxCut} investigated two prediction models: \emph{label advice}, where each vertex is assigned its true label with probability \(\nicefrac{1}{2} + \eta\), and \emph{subset advice}, where the true labels are revealed for an \(\eta\)-fraction of vertices.
Under the label advice model, they improved upon the classic approximation ratio \(\alpha_{GW} \approx 0.878\) of \citet{goemans1995improved} to \(\alpha_{GW} + \tilde{\Omega}(\eta^4)\); under the subset advice model, they enhanced the ratio \(\alpha_{RT} \approx 0.858\) of \citet{raghavendra2012approximating} to \(\alpha_{RT} + \tilde{\Omega}(\eta)\).
Notably, these models rely on the assumption that prediction are \emph{independent} across vertices.

Subsequently, \citet{Aamand2025Improved} extended the framework of \citet{Cohen2024MaxCut} to an edge-based variant, developing algorithms applicable to more general graph problems such as Vertex Cover, Set Cover, and Maximum Independent Set.
More recently, \citet{ghoshal2025constraint} achieved an approximation ratio of \(1 - O(\nicefrac{1}{\eta \sqrt{\Delta}})\) under the label advice model via semidefinite programming (SDP) techniques, provided the average degree \(\Delta\) satisfies \(\Delta \geq C/\eta^2\).

Regarding \MAXSAT{}, \citet{Attias2025LAA-SAT} adopted the subset advice model, proposing a black-box framework that fixes the revealed \(\eta\)-fraction of variables and applies an existing \(\alpha\)-approximation algorithm to the residual subproblem.
This strategy yields an overall approximation ratio of \(\alpha + (1-\alpha)\eta\).

In contrast to previous works~\cite{Cohen2024MaxCut, Aamand2025Improved,  ghoshal2025constraint, Attias2025LAA-SAT}, which assume \emph{independent} prediction errors, our oracle formulation operates without such restrictive assumptions.

\subsection{Data-driven algorithm selection}
While the LAA framework offers significant theoretical advancements, it typically assumes access to an external oracle, the practical acquisition of which remains a challenge. A complementary perspective is offered by data-driven algorithm selection, which seeks to learn effective algorithmic configurations directly from problem instance distributions.

\citet{Gupta2017PACApproach} established the theoretical foundations of this field by introducing a PAC-learning framework for algorithm selection.
They modeled the problem as identifying the optimal parameter configuration for a family of algorithms (e.g., configurations of a \SAT{} solver) with respect to an unknown distribution of inputs.
Their analysis demonstrated that if the algorithm class has bounded complexity—measured by pseudo-dimension—the performance of the learned configuration generalizes to unseen instances with high probability.
This framework has since been extended to various domains, including branch-and-bound~\cite{Balcan2018LearningToBranch} and general parameter tuning~\cite{Balcan2017Learning,Balcan2021SampleComplexity,Khodak2022LAA-Learning,bartlett2022generalization, Balcan2024LAA-Learning}.

In this work, we bridge these two perspectives. We treat the oracle \(f\) as a learnable function within a family \(\mathcal{F}\) and leverage the PAC framework of \citet{Gupta2017PACApproach} to provide learnability guarantees.

\subsection{Smooth integer programs}
Many approximation algorithms for \NPhard{} problems rely on formulating the problem as a linear integer program (LIP).
As solving LIPs is \NPcomplete{}, all problems in \NP{} admit such a formulation.
While many problems possess natural LIP formulations that reveal their structural properties and facilitate approximation, this approach can obscure the intrinsic characteristics of others.
Specifically, an approximately optimal solution to the LIP formulation may diverge significantly from the optimal solution of the original problem.
In such cases, a more natural formulation often involves a \emph{nonlinear} integer program where the objective function is a low-degree polynomial.

In this context, the seminal work of \citet{Arora1999PTAS} introduced a polynomial-time approximation scheme (PTAS) for smooth integer programming problems in the \emph{dense} regime.
Their approach is centered on \emph{exhaustive sampling}: given a precision parameter \(\epsilon\), the algorithm selects a random subset of \(O(\log n / \epsilon^2)\) variables.
It then exhaustively enumerates all possible assignments for this subset—incurring a time complexity of \(2^{O({\log n}/{\epsilon^2})} = n^{O(1/\epsilon^2)}\)—and determines the values of the remaining variables via linear programming, followed by randomized rounding for each assignment.
\citet{Arora1999PTAS} demonstrated that this approach achieves an approximation gap of \(O(n^{d}\epsilon)\).
This strategy provably yields a \((1 - \epsilon)\)-approximation for \emph{smooth} integer programs where the objective function and constraints are \emph{dense} polynomials of constant degree.
This framework was subsequently generalized to handle almost-sparse instances of smooth integer programs, albeit at the cost of sub-exponential time complexity~\cite{Fotakis2016CSP}.

Recently, \citet{bampis24PLA} refined the methodology of \citet{Arora1999PTAS} by incorporating a parsimonious oracle.
Instead of performing exhaustive enumeration, their algorithm samples a multiset \(S\) of \(O(\log n / \epsilon^3)\) variables and queries an oracle for their optimal assignments.
It then proceeds in a manner analogous to \citet{Arora1999PTAS}, using linear programming and randomized rounding to determine the values of the remaining variables.
This method achieves an approximation gap of \(O(n^{d} (\epsilon + \varepsilon))\), where \(\varepsilon = \|\hat{\x} - \x^*\|_1\) denotes the prediction error.
Within the dense regime, they achieve an approximation ratio of \(1 - \epsilon - O(\varepsilon / |S|)\).

In contrast, our approach leverages a \emph{full-information} oracle, which avoids the sampling phase and significantly reduces the additive error from \(O(n^{d} (\epsilon + \varepsilon))\) to \(\tilde{O}(n^{d-1/2}\sqrt{\varepsilon})\).
This enhancement broadens the applicability of the framework from the dense regime to the \emph{near-dense} regime.

\section{Detailed Algorithmic Subroutines}\label{app:omitted-algorithms}

In this section, we provide the detailed subroutines for the hierarchical decomposition and the oracle-guided relaxation which are used in \cref{algorithm: pipeline}. These algorithms constitute the core computational engines of our framework, ensuring efficient polynomial decomposition and robust linear relaxation.

\subsection{Hierarchical Decomposition}
The \textsc{Decompose} subroutine (\cref{algorithm: subroutine-decomposition}) recursively decomposes the polynomial \(p(\x)\) into its hierarchical components, as established in \cref{lemma: decomposition of polynomials}.

\begin{algorithm}[ht]
  \SetAlgorithmName{Subroutine}{subroutine}{List of Subroutines}
  \caption{\textsc{Decompose}}
  \label{algorithm: subroutine-decomposition}
  \DontPrintSemicolon
  \SetKwInOut{Input}{Input}
  \SetKwInOut{Output}{Output}
  \Input{$n$-variate degree-$d$ $\beta$-smooth polynomial $p(\x)$}
  \Output{Decomposition components $\{p_i(\x)\}_{i=1}^n$ and constant remainder $c$}
  \If{$p(\x)$ is constant}{\Return{$p(\x)$}}
  \For{$i \leftarrow 1$ \KwTo $n$}{
    \If{some monomial in $p(\x)$ contains $x_i$}{
      $p_i(\x) \gets \text{coefficient of } x_i \text{ in } p(\x)$\;

      $p(\x) \gets p(\x) - x_i\, p_i(\x)$\;

      \textsc{Decompose}\big($p_i(\x)$\big)\;
    }
  }
  $c \gets p(\x)$\;
\end{algorithm}

\subsubsection{Running Example}
To elucidate the \textsc{Decompose} subroutine, consider the polynomial \(p(\x) = x_1 x_2 x_3 + x_2 x_4 + 3\) over variables \(\{x_1, \dots, x_4\}\). The algorithm iterates through indices \(i=1\) to \(4\):
\begin{enumerate}
    \item \textbf{Iteration \(i=1\):} Extract terms containing \(x_1\). The coefficient is \(p_1(\x) = x_2 x_3\). The remainder becomes \(p(\x) \leftarrow x_2 x_4 + 3\). Recursively decomposing \(p_1(\x)\) yields \(p_{1,2}(\x) = x_3\) and \(p_{1,2,3}(\x) = 1\).
    \item \textbf{Iteration \(i=2\):} Extract terms containing \(x_2\) from the updated remainder. The coefficient is \(p_2(\x) = x_4\). The remainder becomes \(p(\x) \leftarrow 3\). Recursive decomposition gives \(p_{2,4}(\x) = 1\).
    \item \textbf{Iterations \(i=3, 4\):} No terms contain \(x_3\) or \(x_4\). The remainder persists as \(c = 3\).
\end{enumerate}
The resulting non-zero decomposition components are:
\[
    p_1(\x) = x_2 x_3, \quad p_{1,2}(\x) = x_3, \quad p_{1,2,3}(\x) = 1, \quad p_2(\x) = x_4, \quad p_{2,4}(\x) = 1, \quad c = 3.
\]
This corresponds to the expansion \(p(\x) = x_1(x_2(x_3 \cdot 1)) + x_2(x_4 \cdot 1) + 3\).

\subsection{Relaxation}

Given an oracle prediction \(\hat{\x}\) and an error budget \(\varepsilon\), the \textsc{Relax} procedure (\cref{algorithm: relaxation}) constructs the linear program \cref{optimization: d-LP-oracle}. It computes the necessary tolerance parameters \(\delta_I\) for each component constraint to ensure the validity of the relaxation.

\begin{algorithm}[ht]
  \SetAlgorithmName{Subroutine}{subroutine}{List of Subroutines}
  \caption{\textsc{Relax}}
  \label{algorithm: relaxation}
  \SetKwInOut{Input}{Input}
  \SetKwInOut{Output}{Output}
  \DontPrintSemicolon
  
  \Input{$n$-variate degree-$d$ $\beta$-smooth objective $p(\x)$, oracle prediction $\hat{\x}\in\{0,1\}^n$, error budget $\varepsilon$}
  \Output{Linear programming relaxation \cref{optimization: d-LP-oracle}}
  
  \BlankLine
  Compute decomposition components $\{p_I(\x)\}_{I \in \mathcal{I}}$ via $\textsc{Decompose}(p(\x))$\;

  \ForEach{ $I \in \mathcal{I}$}{
    \eIf{$|I| < d - 1$}{
      \(\delta_I \gets 2\beta e n^{d - |I| - 1/2} \sqrt{\varepsilon}\)\;
    }{
      \(\delta_I \gets \beta \sqrt{n \varepsilon}\)\;
    }
    Construct constraint: \(q_I(\x) \in [p_I(\hat{\x}) - \delta_I, p_I(\hat{\x}) + \delta_I]\)\;
  }
  
  Relax integrality constraints: \(\x \in \{0,1\}^n \to \x \in [0,1]^n\)\;

  Construct the linear program \cref{optimization: d-LP-oracle}\;
  
  \Return{\cref{optimization: d-LP-oracle}}
\end{algorithm}


\section{Omitted Material for the Approximation Framework}
\label{app:omitted-proofs}

This section presents the proofs of the intermediate lemmas and the main approximation theorems stated in \cref{sec:approach}. We first establish the properties of smooth polynomials and then derive the bounds for the quadratic and general cases.

\subsection{Properties of Smooth Polynomials}

\begin{proposition}[Bound on constants]\label{prop:constant-bound}
  Let \(p(\x)\) be an \(n\)-variate degree-\(d\) polynomial that is \(\beta\)-smooth. For any fixed index tuple \(I = (i_1,\dots,i_{d-l})\), the \((d-l)\)-level decomposition satisfies
  \[
    p_{I}(\x) = c_{I} + \sum_{j\in [n]} x_{I,j}\cdot p_{I,j}(\x),
  \]
  where the constant term satisfies \(|c_{I}| \le \beta n^{l}\).
\end{proposition}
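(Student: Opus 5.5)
The constant \(c_I\) in the decomposition is, by construction of \textsc{Decompose}, the part of \(p_I(\x)\) left after removing every monomial containing some variable. It is therefore exactly the degree-\(0\) (constant) coefficient of \(p_I\). So the bound \(|c_I|\le \beta n^{l}\) will follow once we know that \(p_I\) is an \(n\)-variate polynomial of degree at most \(l\) satisfying the \(\beta\)-smoothness condition at its own degree \(l\). That condition bounds the constant coefficient by \(\beta n^{l-0}=\beta n^l\).

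The first step is an induction on the depth \(|I| = d-l\) of the tuple. It establishes that every coefficient of a degree-\(m\) monomial in \(p_I\) equals a coefficient of a degree-\((m+|I|)\) monomial of \(p\), namely the monomial obtained by multiplying back \(x_{i_1}\cdots x_{i_{d-l}}\).

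At depth \(0\) this is trivial. For the inductive step, note that \textsc{Decompose} sets \(p_{I,j}\) to be the coefficient of \(x_j\) in what remains of \(p_I\) after earlier extractions. Each monomial of \(p_{I,j}\) thus arises from a unique monomial of \(p_I\) with one more factor \(x_j\), and it carries the same coefficient. Since earlier extractions only delete monomials and never alter coefficients, no cancellations or merges can enlarge a coefficient.

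The second step combines this with Definition~\ref{def:smooth-polynomial}. A degree-\(m\) monomial of \(p_I\) has coefficient bounded by \(\beta n^{d-(m+d-l)} = \beta n^{l-m}\). This is precisely \(\beta\)-smoothness of \(p_I\) viewed as a degree-\(l\) polynomial, which is also consistent with Corollary~\ref{coro:polynomial-bound}. Taking \(m=0\) gives \(|c_I|\le \beta n^{l}\).

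The main obstacle is the bookkeeping when a variable repeats inside \(I\), or when a monomial of \(p\) is \(x_j^2\)-type. In that case, multiplying back could produce a monomial whose degree differs from the naive count, and the smoothness exponent must be checked carefully. I would handle this by working with Boolean (multilinear-reduced) monomials, or by counting degree with multiplicity, so that the correspondence is degree-preserving in exactly the sense used by the definition.
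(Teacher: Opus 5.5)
Your argument is correct and matches the paper's, which only asserts that the bound ``follows directly from'' the definition of \(\beta\)-smoothness. The implicit reasoning there is yours: \textsc{Decompose} only partitions monomials and divides each by one variable per level, so \(c_I\) is the coefficient in \(p\) of the degree-\((d-l)\) monomial \(x_{i_1}\cdots x_{i_{d-l}}\), hence \(|c_I|\le \beta n^{d-(d-l)}=\beta n^{l}\); your induction on \(|I|\) spells out this coefficient preservation.

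The obstacle in your last paragraph disappears if you count degree with multiplicity, since each extraction lowers it by exactly one. Drop the multilinear-reduction alternative, however. Merging, e.g., the coefficients of \(x_j\) and \(x_j^2\) can give a coefficient larger than \(\beta n^{d-1}\), so the reduced polynomial need not be \(\beta\)-smooth.
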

This result follows directly from \cref{def:smooth-polynomial}.

\RestateCoroPolynomialBound*
\begin{proof}
  The proof proceeds by induction on \(l\). For \(l=0\), \(p_{I}(\x)\) reduces to the constant \(c_{I}\). By \cref{prop:constant-bound}, we have \(|c_{I}|\le \beta\), which is consistent with \(\beta(0+1)n^{0}=\beta\).
  
  Assume the bound holds for level \(l-1\); that is, for any index tuple \(I'\) of length \(d-(l-1)\) and any \(\x\in\{0,1\}^n\), \(|p_{I'}(\x)|\le \beta\, l\, n^{l-1}\).
  Fix \(I=(i_1,\dots,i_{d-l})\). Applying the decomposition:
  \[
    |p_{I}(\x)| \le |c_I| + \sum_{j\in[n]} x_{I,j}\, |p_{I,j}(\x)|.
  \]
  Using \cref{prop:constant-bound}, we have \(|c_I|\le \beta n^{l}\). By the inductive hypothesis, \(|p_{I,j}(\x)|\le \beta\, l\, n^{l-1}\) for each \(j\). Since \(x_{I,j}\in\{0,1\}\), the summation contains at most \(n\) non-zero terms. Therefore,
  \[
    |p_{I}(\x)| \le \beta n^{l} + n\cdot (\beta\, l\, n^{l-1}) = \beta n^l (1 + l) = \beta(l+1) n^{l}.
  \]
  This completes the proof.
\end{proof}

\subsection{Analysis of the Quadratic Case}

\RestateLemmaQuadraticToleranceOracle*
\begin{proof}
  Recall that for a quadratic polynomial, the component \(p_i(\x)\) is linear, i.e., \(p_i(\x) = c_i + \sum_{j \in [n]} c_{ij} x_j\). The difference is thus
  \[
    p_i(\hat{\x}) - p_i(\x^*) = \sum_{j\in [n]} c_{ij} (\hat{x}_j - x^*_j).
  \]
  By the \(\beta\)-smoothness property, the quadratic coefficients satisfy \(|c_{ij}| \leq \beta\). Applying the Cauchy-Schwarz inequality yields
  \[
    \left|\sum_{j\in [n]} c_{ij} (\hat{x}_j - x^*_j)\right| \leq \sqrt{\sum_{j \in [n]} c_{ij}^2} \cdot \sqrt{\sum_{j \in [n]} (\hat{x}_j - x^*_j)^2}.
  \]
  The first term is bounded by \(\sqrt{n \beta^2} = \beta \sqrt{n}\). For the second term, since the variables are binary, \((\hat{x}_j - x^*_j)^2 = |\hat{x}_j - x^*_j|\), and thus the sum equals \(\|\hat{\x} - \x^*\|_1 = \varepsilon\).
  Therefore, the deviation is at most \(\beta\sqrt{n}\sqrt{\varepsilon}\).
\end{proof}

In the linear setting, we establish the following lemma regarding the rounding error:
\begin{restatable}{lemma}{RestateLemmaLinearApproxMcDiarmid}
\label{lemma: linear approximation lemma with McDiarmid}
Let $\y \in [0,1]^n$, and let $\z \in \{0,1\}^n$ be a vector obtained via independent randomized rounding, where $\Pr[z_i = 1] = y_i$ for all $i \in [n]$. Suppose the coefficients $(c_i)_{i=1}^n$ satisfy $|c_i| \le \beta$. Then, for any $k \ge 1$, with probability at least $1 - 2/n^{k+1}$,
\[
  \Big|\sum_{i=1}^n c_i (y_i - z_i)\Big| \le \beta\, \sqrt{\tfrac{k+1}{2}}\, \sqrt{n\ln n}.
\]
\end{restatable}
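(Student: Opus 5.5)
We apply McDiarmid's bounded-differences inequality (equivalently Hoeffding's inequality) to the function
\[
  F(z_1,\dots,z_n) \;=\; \sum_{i=1}^n c_i z_i ,
\]
viewed as a function of the independent random variables \(z_1,\dots,z_n\). The plan has three steps: identify the mean, bound the per-coordinate sensitivities, and tune the deviation threshold to the target failure probability.

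\textbf{Step 1 (mean).} Since \(\Pr[z_i=1]=y_i\), we have \(\Exp[z_i]=y_i\). By linearity,
\[
  \Exp[F] \;=\; \sum_{i=1}^n c_i y_i ,
\]
so the quantity to control, \(\sum_{i=1}^n c_i(y_i-z_i)\), is exactly \(\Exp[F]-F\).

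\textbf{Step 2 (bounded differences).} Changing a single coordinate \(z_i\in\{0,1\}\), with the others fixed, changes \(F\) by \(|c_i|\,|z_i-z_i'|\le |c_i|\le\beta\). So the bounded-difference constants are \(b_i=\beta\), and
\[
  \sum_{i=1}^n b_i^2 \;=\; n\beta^2 .
\]

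\textbf{Step 3 (concentration and tuning).} McDiarmid's inequality gives, for every \(t>0\),
\[
  \Pr\bigl[|F-\Exp F|\ge t\bigr] \;\le\; 2\exp\!\left(-\frac{2t^2}{n\beta^2}\right).
\]
We choose \(t=\beta\sqrt{\tfrac{k+1}{2}}\sqrt{n\ln n}\). Then
\[
  \frac{2t^2}{n\beta^2} \;=\; \frac{2\beta^2\cdot\tfrac{k+1}{2}\,n\ln n}{n\beta^2} \;=\; (k+1)\ln n ,
\]
so the failure probability is at most \(2e^{-(k+1)\ln n}=2/n^{k+1}\). This is exactly the stated bound.

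There is no real obstacle in this argument. The only care needed is the constant bookkeeping in the exponent, namely that the factor \(2\) in McDiarmid's inequality cancels the \(\tfrac12\) inside \(\sqrt{\tfrac{k+1}{2}}\). The hypothesis \(k\ge 1\) is not used in the derivation; the bound holds for any \(k\) and \(n\ge 2\), where \(\ln n>0\).
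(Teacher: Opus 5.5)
Your proposal is correct and follows the paper's proof essentially verbatim: McDiarmid's inequality with bounded-difference constants $\beta$ gives $\Pr[|X|\ge t]\le 2\exp(-2t^2/(n\beta^2))$, and the choice $t=\beta\sqrt{(k+1)/2}\,\sqrt{n\ln n}$ brings the failure probability down to $2/n^{k+1}$. The only cosmetic difference is that you apply the inequality to $\sum_i c_i z_i$ rather than directly to $\sum_i c_i(y_i-z_i)$; these differ by a constant, so the sensitivities and the resulting bound are the same.
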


\begin{proof}
Let $X = \sum_{i=1}^n c_i (y_i - z_i)$. Due to the independence of the rounding process, $\mathbb{E}[z_i] = y_i$, which implies $\mathbb{E}[X] = 0$. Consider $X$ as a function $f(z_1,\dots,z_n) = \sum_{i=1}^n c_i (y_i - z_i)$ of the independent variables $z_1,\dots,z_n$. Altering a single component $z_i$ changes the value of $f$ by at most $|c_i| \le \beta$. Consequently, the bounded difference condition holds with $\Delta_i \le \beta$ for all $i$. By McDiarmid’s inequality, for any $t>0$, we have:
\[
  \Pr\big(|X| \ge t\big) \le 2 \exp\!\left( -\frac{2 t^2}{\sum_{i=1}^n \Delta_i^2} \right) \le 2 \exp\!\left( -\frac{2 t^2}{n\beta^2} \right).
\]
Choosing $t = \beta\, \sqrt{\tfrac{k+1}{2}}\, \sqrt{n\ln n}$ yields the claim with probability at least $1 - 2/n^{k+1}$.
\end{proof}

\subsection{Proof of \cref{thm:rounding-error-quadratic}}
\RestateThmRoundingErrorQuadratic*
\begin{proof}
  Observe that the deviation can be decomposed as follows:
  \[
    \begin{aligned}
      \left|\sum_{i\in[n]} \big[z_i p_i(\z) - y_i p_i(\y)\big]\right|
      &\le \left|\sum_{i\in[n]} z_i \big(p_i(\z) - p_i(\y)\big)\right| + \left|\sum_{i\in[n]} (y_i - z_i) p_i(\y) \right|.\\
    \end{aligned}
    \]
    The first term can be bounded by noting that \(z_i \in \{0,1\}\). Applying a union bound over \(i\in[n]\), \cref{lemma: linear approximation lemma with McDiarmid} holds simultaneously for all coordinates with probability at least \(1 - 2/n^{k}\).
    Specifically,
    \begin{equation} \label{ineq: rounding-error-quadratic-first-term}  
      \left|\sum_{i\in[n]} z_i \big(p_i(\z) - p_i(\y)\big)\right| \leq n \cdot \beta \sqrt{\frac{k+1}{2}} \sqrt{n \ln n},
    \end{equation}
    which holds with probability at least \(1 - 2/n^{k}\).

    For the second term, applying an analogous McDiarmid argument to \(Y = \sum_{i\in[n]} (y_i - z_i) p_i(\y)\), and using the bound \(|p_i(\y)| \le 2\beta n\), yields
    \begin{equation} \label{ineq: rounding-error-quadratic-second-term}
      \left|\sum_{i\in[n]} (y_i - z_i) p_i(\y) \right| \leq n \cdot 2\beta \sqrt{\frac{k+1}{2}} \sqrt{n \ln n},
    \end{equation}
    which holds with probability at least \(1 - 2/n^{k+1}\).

    Finally, combining \cref{ineq: rounding-error-quadratic-first-term} and \cref{ineq: rounding-error-quadratic-second-term} via a union bound yields the desired result.
  \end{proof}

\subsection{Proof of \cref{thm: comprehensive-guarantee-quadratic}}
\RestateThmComprehensiveGuaranteeQuadratic*
\begin{proof}
  Our approach proceeds by relaxing \cref{optimization: quadratic-ip} to \cref{optimization: quadratic-ip-oracle} using the oracle, solving for a fractional solution \(\y\), and subsequently applying randomized rounding to recover an integral solution \(\z\).
  Combining the rounding error from \cref{ineq: rounding-error-quadratic} and the relaxation gap from \cref{gap-between-fractional-solution-and-opt-solution-quadratic}, we derive the following bound with probability at least \(1 - 4n^{-k}\):
  \begin{equation}
    \begin{aligned}
      p(\z) &\geq p(\y)  - 3\beta\sqrt{\frac{k+1}{2}} \cdot n^{3/2}\sqrt{\ln n} \\
            &\geq p(\x^*) - 2\beta n^{3/2}\sqrt{\varepsilon} - 3\beta\sqrt{\frac{k+1}{2}} \cdot n^{3/2}\sqrt{\ln n}\;.
    \end{aligned}
  \end{equation}
  In the case of dense instances (e.g., \MAXCUT{} with average degree \(\Omega(n)\)), the optimal value scales as \(p(\x^*) = \OPT{} = \Theta(n^2)\). Consequently, the additive bound translates to a multiplicative guarantee:
  \begin{equation}\label{gap-between-approximated-solution-and-opt-solution-dense}
      p(\z) \geq \OPT{} \cdot \left(1 - \tilde{O}\left(\sqrt{\frac{\varepsilon}{n}}\right)\right).
  \end{equation}
\end{proof}

\subsection{Proof of \cref{thm:relaxation-gap}}
\RestateThmRelaxationGap*
\begin{proof}
Define the approximation error function \(\textsc{gap}(p,q)(\x) := |p(\x) - q(\x)|\). By the optimality of \(\y\) for the maximization problem \cref{optimization: d-LP-oracle} and the feasibility of \(\x^*\), we have:
\begin{equation}
  \begin{aligned}
    p(\y) &\geq q(\y) - \textsc{gap}(p,q)(\y)\\
    &\geq q(\x^*) - \textsc{gap}(p,q)(\y)\\
    &\geq p(\x^*) - \textsc{gap}(p,q)(\y)- \textsc{gap}(p,q)(\x^*).
  \end{aligned}
\end{equation}
Thus, it suffices to bound \(\textsc{gap}(p,q)(\x)\) for any \(\x \in [0,1]^n\).

Consider the gap for a polynomial term \(p_I(\x)\) and its linear approximation \(q_I(\x)\):
\begin{equation}
  \begin{aligned}
    \textsc{gap}(p_I,q_I)(\x) &= \left|\sum_{j\in [n]} x_{I,j} \left(p_{I,j}(\x) - p_{I,j}(\hat{\x})\right)\right| \\
    &\leq \sum_{j\in [n]} \left|p_{I,j}(\x) - p_{I,j}(\hat{\x}) \right|\\
    &\leq \sum_{j\in [n]} \left(\left|p_{I,j}(\x) - q_{I,j}(\x)\right| + \left|q_{I,j}(\x) - p_{I,j}(\hat{\x})\right|\right)\\
    &\leq \sum_{j\in[n]}\textsc{gap}(p_{I,j}, q_{I,j})(\x) + \sum_{j\in [n]} \delta_{I,j},
  \end{aligned}  
\end{equation}
where the first inequality uses \(x_{I,j} \in [0,1]\) and the triangle inequality, and the last follows from the constraint \(q_{I,j}(\x) \in p_{I,j}(\hat{\x}) \pm \delta_{I,j}\).

Applying this recurrence relation over the decomposition tree yields:
\[
  \textsc{gap}(p,q)(\x) \leq \sum_{I\in \mathcal{I}} \delta_I.
\]
Substituting the bounds for \(\delta_I\) derived in \cref{sec: oracle-guided-relaxation-general}:
\begin{itemize}
    \item For linear terms (\(|I| = d-1\)), \(\delta_I \leq \beta \sqrt{n}\sqrt{\varepsilon}\).
    \item For higher-order terms (\(1 \leq |I| < d-1\)), \(\delta_I \leq 2\beta e n^{d - |I| - 1/2}\sqrt{\varepsilon}\).
\end{itemize}
Summing over all \(I \in \mathcal{I}\), we obtain:
\[
  \sum_{I\in \mathcal{I}} \delta_I \leq \left[2e(d - 2) + 1\right] \beta n^{d - 1/2} \sqrt{\varepsilon}.
\]
Combining the errors for \(\y\) and \(\x^*\) completes the proof.
\end{proof}

\subsection{Omitted Results for General Case}

\RestateThmRandomizedRoundingMcDiarmid*
\begin{proof}
We proceed by induction over the decomposition depth of $p$. Fix an index tuple $I = (i_1, \dots, i_{d-l})$ with $l \in [d-1]$. Using the decomposition,
\[
  \begin{aligned}
    \big|p_I(\z) - p_I(\y)\big| &= \Big|\sum_{i\in[n]} z_{I,i}\, p_{I,i}(\z) - \sum_{i\in[n]} y_{I,i}\, p_{I,i}(\y)\Big| \\
    &\le \Big|\sum_{i\in[n]} z_{I,i}\, \big(p_{I,i}(\z) - p_{I,i}(\y)\big)\Big| + \Big|\sum_{i\in[n]} (z_{I,i} - y_{I,i})\, p_{I,i}(\y)\Big|.
  \end{aligned}
\]
For the second term, invoking \cref{lemma: linear approximation lemma with McDiarmid} together with the bound \(|p_{I,i}(\y)| \le 2\beta e\, n^{d - |I| - 1}\) from \cref{lemma: polynomial-bound-arora}, we obtain, with probability at least \(1 - 2/n^{k+1}\),
\[
  \Big|\sum_{i\in[n]} (z_{I,i} - y_{I,i})\, p_{I,i}(\y)\Big| \le 2\beta e\, n^{d - |I| - 1}\, \sqrt{\tfrac{k+1}{2}}\, \sqrt{n\ln n}.
\]
For the first term, suppose a uniform per-coordinate bound $|p_{I,i}(\z) - p_{I,i}(\y)| \le \Delta$ holds with failure probability at most $\delta$. A union bound then gives
\[
  \Big|\sum_{i\in[n]} z_{I,i}\, \big(p_{I,i}(\z) - p_{I,i}(\y)\big)\Big| \le n\, \Delta,
\]
with probability at least $1 - n\delta$.
At the base level, $|I| = d-1$, \cref{lemma: linear approximation lemma with McDiarmid} implies
\[
  |p_I(\z) - p_I(\y)| \le \beta\, \sqrt{\tfrac{k+1}{2}}\, \sqrt{n\ln n}
\]
with probability at least $1 - 2/n^{k+1}$. Propagating this bound through the $d-1$ levels of the decomposition and aggregating the contributions yields
\[
  |p(\z) - p(\y)| \le \big(1 + 2e\,(d-2)\big)\, \beta\, n^{d-1}\, \sqrt{\tfrac{k+1}{2}}\, \sqrt{n\ln n},
\]
which holds with probability at least $1 - 2d/n^{\,k+1 - (d - 1)}$.
\end{proof}

\subsection{Proof of \cref{thm:deterministic-rounding-monotone}}
Below we first demonstrate the procedure of greedy rounding in \cref{alg:deterministic-rounding}.
\begin{algorithm}[ht]
\caption{Greedy Deterministic Rounding}
\label{alg:deterministic-rounding}
\SetKwInOut{Input}{Input}
\SetKwInOut{Output}{Output}
\Input{$\y \in [0,1]^n$, multilinear objective $p(\x)$}
\Output{Integral solution $\z \in \{0,1\}^n$}

\For{$i \leftarrow 1$ \KwTo $n$}{
    $g_i(t) := p(y_1, \dots, y_{i-1}, t, y_{i+1}, \dots, y_n)$\;

    $z_i \leftarrow \mathop{\mathrm{argmax}}_{t \in \{0,1\}} g_i(t)$\;

    Update $\y$ by setting $y_i \leftarrow z_i$\;
}
\Return{$\z = \y$}\;
\end{algorithm}

\RestateThmDeterministicRoundingMonotone*
\begin{proof}
Let \(\y^{(i)}\) denote the solution after rounding the first \(i\) variables. We show that \(p(\y^{(i)}) \geq p(\y^{(i-1)})\) for all \(i \in [n]\).
Consider the \(i\)-th step where we determine \(z_i\). Since \(p(\x)\) is multilinear, the function \(g_i(t)\) defined by fixing all other variables to their values in \(\y^{(i-1)}\) is linear (affine) in \(t\).
A linear function on the interval \([0,1]\) achieves its maximum at one of the endpoints. Therefore:
\[
\max_{t \in \{0,1\}} g_i(t) \geq g_i(y_i^{(i-1)}),
\]
which implies \(p(\y^{(i)}) \geq p(\y^{(i-1)})\).
By induction, \(p(\z) = p(\y^{(n)}) \geq p(\y^{(0)}) = p(\y)\).
\end{proof}

\section{Omitted Material for Learnability Guarantees}\label{sec:omitted-proofs-learnability}

In this section, we provide the complete proofs for the learnability results presented in \cref{sec:learnability}. We begin by recalling standard definitions from learning theory and then proceed to bound the pseudo-dimension of our algorithm class.

\subsection{Standard Definitions and Results}

\begin{definition}[\((\epsilon,\delta)\)-learnable]
    A learning algorithm \(L\) is said to \((\epsilon, \delta)\)-learn the optimal algorithm in \(\mathcal{A}\) using \(m\) samples if, for every distribution \(\mathcal{D}\) over \(\Pi\), with probability at least \(1 - \delta\) over the draw of \(\{\pi_1, \dots, \pi_m \}\sim \mathcal{D}\), \(L\) outputs an algorithm \(\hat{A} \in \mathcal{A}\) such that \(\text{error}(\hat{A}; \mathcal{D}) \leq \epsilon\).
\end{definition}

\begin{definition}[Pseudo-dimension, \citealt{anthony1999neural}]\label{def:pseudo-dimension}
    Let \(\mathcal{H}\) be a set of real-valued functions defined on \(\Pi\).
    A finite subset \(S = \{\pi_1, \dots, \pi_m\} \subseteq \Pi\) is \emph{(pseudo-)shattered} by \(\mathcal{H}\) if there exist real-valued witnesses \(r_1, \dots, r_m\) such that for each subset \(T \subseteq S\), there exists a function \(h \in \mathcal{H}\) satisfying \(h(\pi_i) > r_i\) if and only if \(i \in T\).
    The \emph{pseudo-dimension} of \(\mathcal{H}\), denoted by \(\text{Pdim}(\mathcal{H})\), is the maximum cardinality \(m\) for which some subset \(S \subseteq \Pi\) of size \(m\) is pseudo-shattered by \(\mathcal{H}\).
\end{definition}

\begin{lemma}[Uniform convergence, \citealt{anthony1999neural}]\label{lemma:uniform-convergence}
Let \(\mathcal{H}\) be a class of functions mapping \(\Pi\) to \([0,H]\), with pseudo-dimension \(d_{\mathcal{H}}\). For any distribution \(\mathcal{D}\) over \(\Pi\), any precision \(\epsilon > 0\), and any confidence parameter \(\delta \in (0,1]\), if the sample size satisfies
\begin{equation}\label{ineq:sample-complexity}
    m \geq C \left(\frac{H}{\epsilon}\right)^2 \left(d_{\mathcal{H}} + \log\left(\frac{1}{\delta}\right)\right),
\end{equation}
where \(C\) is an absolute constant, then with probability at least \(1 - \delta\) over the draw of \(\{\pi_1, \ldots, \pi_m\} \sim \mathcal{D}\), the following uniform bound holds:
\[
    \sup_{h \in \mathcal{H}} \left|\Exp_{\pi \sim \mathcal{D}}\!\left[h(\pi)\right] - \frac{1}{m} \sum_{i=1}^m h(\pi_i)\right| \leq \epsilon.
\]
\end{lemma}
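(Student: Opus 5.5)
The plan is the classical symmetrization-plus-covering route, with a chaining refinement to match the stated form of \cref{ineq:sample-complexity} (no extra $\log(H/\epsilon)$ factor). Rescaling by $H$ reduces everything to a class $\mathcal{H}/H$ of $[0,1]$-valued functions with the same pseudo-dimension $d_{\mathcal{H}}$, so we may take $H=1$ and target accuracy $\epsilon/H$.

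\emph{Step 1 (concentration of the supremum).} Let $\Phi(S)=\sup_{h}\bigl|\Exp_{\mathcal{D}}[h]-\frac1m\sum_i h(\pi_i)\bigr|$. Replacing one sample changes $\Phi$ by at most $1/m$. McDiarmid's inequality (as in \cref{lemma: linear approximation lemma with McDiarmid}) therefore gives
\[
\Phi\le \Exp[\Phi]+\sqrt{\log(2/\delta)/(2m)}
\]
with probability at least $1-\delta$. This term is at most $\epsilon/2$ once $m\gtrsim \epsilon^{-2}\log(1/\delta)$.

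\emph{Step 2 (symmetrization).} By the ghost-sample argument, $\Exp[\Phi]\le 2\,\Exp_{S,\sigma}\bigl[\sup_h \frac1m\sum_i\sigma_i h(\pi_i)\bigr]$, which is twice the Rademacher complexity $\mathfrak{R}_m(\mathcal{H})$, with i.i.d. signs $\sigma_i$. From here on it suffices to condition on $S$ and bound the empirical Rademacher complexity.

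\emph{Step 3 (covering numbers from pseudo-dimension).} This is the main obstacle. I would invoke Haussler's packing bound. For any $S$ of size $m$, the $L_2(P_S)$ covering number of $\mathcal{H}|_S$ at scale $\gamma$ satisfies
\[
N(\gamma)\le e(d_{\mathcal{H}}+1)(2e/\gamma)^{d_{\mathcal{H}}}.
\]
If one wants to avoid citing Haussler, the fallback is the cruder Sauer--Shelah argument. Discretize values to a $\gamma$-grid, and observe that pseudo-shattering of the thresholded class $\{(\pi,r)\mapsto \mathbf{1}[h(\pi)>r]\}$ is exactly VC-shattering. This gives $\log N(\gamma)=O(d_{\mathcal{H}}\log(m/\gamma))$, which costs only a logarithmic factor.

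\emph{Step 4 (chaining).} Dudley's entropy integral gives
\[
\mathfrak{R}_m\le \frac{C}{\sqrt m}\int_0^1\sqrt{\log N(\gamma)}\,d\gamma .
\]
Plugging in Step 3 yields $\int_0^1\sqrt{d_{\mathcal{H}}\log(C/\gamma)}\,d\gamma=O(\sqrt{d_{\mathcal{H}}})$, because the logarithm is integrable under the square root. Hence $\mathfrak{R}_m=O(\sqrt{d_{\mathcal{H}}/m})$, and $2\mathfrak{R}_m\le\epsilon/2$ once $m\ge C\epsilon^{-2}d_{\mathcal{H}}$.

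Combining Steps 1--4 and undoing the rescaling gives the claimed sample size $m\ge C(H/\epsilon)^2(d_{\mathcal{H}}+\log(1/\delta))$. The delicate point is Step 3: obtaining a covering bound that is independent of $m$, as Haussler's bound is, is precisely what lets the chaining integral remove the $\log(H/\epsilon)$ factor. With only the Sauer--Shelah route, the result holds with an additional logarithmic factor, which is harmless for how the lemma is used in \cref{thm:erm-learnability}.
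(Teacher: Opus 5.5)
The paper does not prove this lemma; it is imported as a citation to Anthony and Bartlett. Your proposal is therefore a self-contained reconstruction, not an alternative to an in-paper argument. Your route is McDiarmid for the supremum, ghost-sample symmetrization, Haussler's packing bound, and Dudley's entropy integral. This is the standard correct way to get the log-free form $m\gtrsim (H/\epsilon)^2\bigl(d_{\mathcal H}+\log(1/\delta)\bigr)$. You also identify what it adds over citing. The single-scale covering argument behind the usual textbook statements, like your Sauer--Shelah fallback, only gives this with an extra $\log(H/\epsilon)$ or $\log m$ factor. So the form used in the paper genuinely needs chaining with an $m$-independent entropy bound.

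Three details need tightening.

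(i) Haussler's bound controls $L_1(P_S)$ packing numbers, not $L_2$. For $[0,1]$-valued functions, pass to $L_2$ via $\|f-g\|_{L_2(P_S)}^2\le\|f-g\|_{L_1(P_S)}$. This gives $N_2(\gamma)\le e(d_{\mathcal H}+1)(2e/\gamma^2)^{d_{\mathcal H}}$, which only changes constants in $\int_0^1\sqrt{\log N_2(\gamma)}\,d\gamma=O(\sqrt{d_{\mathcal H}})$.

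(ii) Because $\Phi$ contains an absolute value, symmetrization gives $\Exp\Phi\le 2\,\Exp\sup_h\bigl|\frac1m\sum_i\sigma_i h(\pi_i)\bigr|$. Without the absolute value, your inequality fails already for a single non-constant function. Apply Dudley to $\mathcal H\cup(-\mathcal H)$ instead, which at most doubles the covering numbers.

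(iii) Your argument actually yields $m\gtrsim (H/\epsilon)^2\bigl(d_{\mathcal H}+1+\log(1/\delta)\bigr)$. The extra constant comes from $\log(2/\delta)$ in Step 1 and from the base point of the chain, since the Rademacher average of even one function is of order $1/\sqrt m$. Absorbing it requires $d_{\mathcal H}\ge 1$. This is a defect of the statement, not of your proof. If $d_{\mathcal H}=0$ (a single function) and $\delta<1$ is close enough to $1$, the stated condition permits $m=1$. A function taking values $0$ and $H$ with probability $1/2$ each then violates the conclusion for every $\epsilon<H/2$.
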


The following lemma formally bridges uniform convergence and the sample complexity of ERM, establishing that a bounded pseudo-dimension is sufficient for learnability.

\begin{restatable}{lemma}{RestateCoroErmLearning}\label{coro:erm-learning}
    Fix \(\epsilon > 0\), \(\delta \in (0,1]\), instance space \(\Pi\), and cost function \(\mathtt{COST}\).
    Suppose the algorithm class \(\mathcal{A}\) induces a cost function class with pseudo-dimension \(d_{\mathcal{A}}\).
    Then, any ERM algorithm \((2\epsilon, \delta)\)-learns the optimal algorithm in \(\mathcal{A}\) with a sample size \(m\) satisfying:
    \begin{equation}
    m \geq C \left(\frac{H}{\epsilon}\right)^2 \left(d_{\mathcal{A}} + \log\left(\frac{1}{\delta}\right)\right),
\end{equation}
where \(C\) is an absolute constant.
\end{restatable}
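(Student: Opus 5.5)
The plan is the standard ERM argument: uniform convergence from \cref{lemma:uniform-convergence}, followed by a two-sided comparison with the best-in-class algorithm. Let \(\mathcal{H} = \{\pi \mapsto \mathtt{COST}(\mathcal{A}_f,\pi) : f \in \mathcal{F}\}\) be the cost function class induced by \(\mathcal{A}\). By hypothesis it has pseudo-dimension \(d_{\mathcal{A}}\). By definition of \(\mathtt{COST}\) and of \(H\), every element of \(\mathcal{H}\) takes values in \([0,H]\).

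Take \(m\) satisfying the stated bound. Then \cref{lemma:uniform-convergence} applies directly with \(d_{\mathcal{H}} = d_{\mathcal{A}}\). So with probability at least \(1-\delta\) over the sample \(S\), we have \(|R(f) - \hat{R}_S(f)| \le \epsilon\) simultaneously for all \(f\in\mathcal{F}\), where \(\hat{R}_S(f) = \frac1m\sum_i \mathtt{COST}(\mathcal{A}_f,\pi_i)\) is the empirical risk. We condition on this event for the rest of the argument.

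Let \(\hat f\) be any ERM output and \(f^*_{\mathtt{cost}}\) the risk minimizer. The comparison is the chain
\[
R(\hat f) \le \hat R_S(\hat f) + \epsilon \le \hat R_S(f^*_{\mathtt{cost}}) + \epsilon \le R(f^*_{\mathtt{cost}}) + 2\epsilon .
\]
The first and third inequalities use uniform convergence, applied once to \(\hat f\) and once to \(f^*_{\mathtt{cost}}\). The middle inequality is the ERM property. This gives \(\mathrm{error}(\hat f) \le 2\epsilon\) with probability at least \(1-\delta\), and since the sample bound does not depend on \(\mathcal{D}\), this is exactly \((2\epsilon,\delta)\)-learnability.

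No step is a real obstacle; only two technical points need care. First, the argmin defining \(f^*_{\mathtt{cost}}\) may not be attained. In that case I would replace it by any \(f'\) with \(R(f') \le \inf_{f} R(f) + \gamma\), run the same chain, and let \(\gamma\to 0\). Second, the ``for every distribution'' quantifier must hold. It does, because the constant \(C\) and the sample bound in \cref{lemma:uniform-convergence} are uniform over \(\mathcal{D}\).
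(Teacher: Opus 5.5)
Your proposal is correct and matches the paper's proof essentially step for step: apply \cref{lemma:uniform-convergence} to the induced $[0,H]$-valued cost class, then run the chain $R(\hat{f})\le \hat{R}_S(\hat{f})+\epsilon\le \hat{R}_S(f^*_{\mathtt{cost}})+\epsilon\le R(f^*_{\mathtt{cost}})+2\epsilon$. Your two extra remarks are valid refinements that the paper leaves implicit: handling a non-attained minimizer via a $\gamma$-approximate minimizer on the same good event, and noting that the sample bound is uniform over $\mathcal{D}$.
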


\begin{proof}
Let \(\mathcal{H} = \{ h_A : \Pi \to [0,H] \mid A \in \mathcal{A} \}\) where \(h_A(\pi) := \mathtt{COST}(A,\pi)\). By assumption, \(\mathrm{Pdim}(\mathcal{H}) = d_{\mathcal{A}}\). If \(m\) satisfies \cref{ineq:sample-complexity}, \cref{lemma:uniform-convergence} guarantees that with probability at least \(1-\delta\),
\[
  \sup_{A \in \mathcal{A}} \Big| \Exp_{\pi\sim\mathcal{D}}[h_A(\pi)] - \tfrac{1}{m}\sum_{i=1}^m h_A(\pi_i) \Big| \le \epsilon.
\]
Let \(A_{\mathcal{D}} \in \arg\min_{A\in\mathcal{A}} \Exp_{\pi\sim\mathcal{D}}[h_A(\pi)]\) be the optimal algorithm and let \(\hat{A}\) be the ERM solution, which minimizes \(\tfrac{1}{m}\sum_{i=1}^m h_A(\pi_i)\). We have:
\[
    \begin{aligned}    
        \Exp_{\pi\sim\mathcal{D}}[h_{\hat{A}}(\pi)]
        &\le \tfrac{1}{m}\sum_{i=1}^m h_{\hat{A}}(\pi_i) + \epsilon\\
        &\le \tfrac{1}{m}\sum_{i=1}^m h_{A_{\mathcal{D}}}(\pi_i) + \epsilon\\
        &\le \Exp_{\pi\sim\mathcal{D}}[h_{A_{\mathcal{D}}}(\pi)] + 2\epsilon,
    \end{aligned}
\]
where the first and third inequalities follow from uniform convergence, and the second from the definition of ERM.
Rearranging terms yields
\(
\Exp_{\pi\sim\mathcal{D}}[h_{\hat{A}}(\pi)] - \Exp_{\pi\sim\mathcal{D}}[h_{A_{\mathcal{D}}}(\pi)] \le 2\epsilon.
\)
Thus, \(\hat{A}\) has an error of at most \(2\epsilon\) with probability at least \(1-\delta\).
\end{proof}

The theoretical guarantee of ERM relies on the uniform convergence of empirical means to their true expectations. We characterize the complexity of the function class using the notion of \emph{pseudo-dimension}~\citep{pollard1990empirical}, which generalizes the VC-dimension to real-valued functions.

The subsequent theorem establishes that if \(\mathcal{F}\) possesses a bounded VC-dimension, the induced algorithm class \(\mathcal{A}\) also exhibits bounded complexity.

\RestateThmPseudoDimension*
\begin{proof}
    Since the algorithms in \(\mathcal{A}\) involve internal randomness (e.g., in the randomized rounding step), we formalize the cost function by explicitly treating the random seed as part of the input.
    Let \(\Xi\) denote the space of internal random seeds. We define the cost function \(h_f: \Pi \times \Xi \to [0, H]\) as \(h_f(\pi, \xi) := \mathtt{COST}(\mathcal{A}_f(\pi; \xi))\), where \(\mathcal{A}_f(\pi; \xi)\) denotes the execution of the algorithm with predictor \(f\) and random seed \(\xi\).
    We now bound the pseudo-dimension of the function class \(\mathcal{H} = \{ h_f : f \in \mathcal{F} \}\) on the augmented domain \(\Pi \times \Xi\).

    Fix a finite set of augmented instances \(S = \{(\pi_1, \xi_1), \dots, (\pi_m, \xi_m)\} \subseteq \Pi \times \Xi\) and witnesses \(r_1, \dots, r_m \in \R\).
    For each \(f \in \mathcal{F}\), define the labeling vector \(\ell_f \in \{0,1\}^m\) such that \(\ell_f(j) = \mathbb{I}\{ h_f(\pi_j, \xi_j) > r_j \}\).
    
    Recall that the algorithm proceeds in three stages:
    1. Prediction: \(\hat{\x} = f(\pi)\).
    2. Relaxation: Compute fractional solution \(\y = \text{Pipeline}(\pi, \hat{\x})\).
    3. Rounding: Compute integral solution \(\z = \text{Round}(\y, \xi)\).
    
    Crucially, for a fixed instance \(\pi_j\) and fixed random seed \(\xi_j\), the final cost \(\mathtt{COST}(\z)\) is uniquely determined by the prediction \(\hat{\x}_j = f(\pi_j)\). 
    Let \(P_f \in \{0,1\}^{n \times m}\) be the prediction matrix on the underlying instances \(\{\pi_1, \dots, \pi_m\}\), where \([P_f]_{i,j} = f_i(\pi_j)\).
    Since the mapping from \(P_f\) to the costs on \(S\) is deterministic (given the fixed \(S\)), the number of distinct labelings \(\{\ell_f\}\) is upper-bounded by the number of distinct prediction matrices \(\{P_f\}\).
    
    For each coordinate \(i\), let \(\mathcal{F}_i = \{ f_i : f \in \mathcal{F} \}\) with \(\mathrm{VCdim}(\mathcal{F}_i) = d_i\). By Sauer's Lemma~\cite{Sauer1972lemma}, the number of distinct binary vectors \((f_i(\pi_1),\dots,f_i(\pi_m))\) on \(S\) is at most \(\sum_{k=0}^{d_i} \binom{m}{k} \le (e m/d_i)^{d_i}\).
    Even if the coordinates are coupled, the set of valid prediction matrices is a subset of the Cartesian product of the coordinate-wise projections. Therefore, the total number of distinct matrices \(P_f\) on \(S\) is at most the product of the counts for each coordinate:
    \[
    \prod_{i=1}^n \left(\frac{e m}{d_i}\right)^{d_i} \le (e m)^{\sum d_i} = (e m)^{d_{\mathcal{F}}}.
    \]
    
    Consequently, if \(S\) is pseudo-shattered by \(\mathcal{H}\), then all \(2^m\) possible labelings must be realizable. Therefore, we must have
    \(
        2^m \le (e m)^{d_{\mathcal{F}}}.
    \)
    Taking logarithms implies \(m \log 2 \le d_{\mathcal{F}} (\log m + 1)\).
    Standard algebraic manipulation (see, e.g., \cite{anthony1999neural}) shows that this inequality cannot hold if \(m > C d_{\mathcal{F}} \log(e d_{\mathcal{F}})\) for a sufficiently large constant \(C\).
    Thus, \(\mathrm{Pdim}(\mathcal{H}) \le C d_{\mathcal{F}} \log(e d_{\mathcal{F}})\).
\end{proof}

\subsection{Proof of \cref{prop:optimality-transfer}}
\RestatePropOptimalityTransfer*
\begin{proof}
    The proof proceeds in two steps.
    
    \textbf{Step 1: Optimality of cost minimization.}
    Recall that the cost function is defined as \(\mathtt{COST}(\mathcal{A}_f, \pi) = H - p(\z_f(\pi))\), where \(H\) is a global upper bound on the objective value.
    By definition, \(f^*_{\mathrm{cost}}\) minimizes the expected cost, which implies:
    \[
        \mathbb{E}[\mathtt{COST}(\mathcal{A}_{f^*_{\mathrm{cost}}}, \pi)] \le \mathbb{E}[\mathtt{COST}(\mathcal{A}_{f^*_{\mathrm{pred}}}, \pi)].
    \]
    Substituting the definition of cost, we obtain:
    \[
        \mathbb{E}[H - p(\z_{f^*_{\mathrm{cost}}}(\pi))] \le \mathbb{E}[H - p(\z_{f^*_{\mathrm{pred}}}(\pi))] \implies \mathcal{P}(f^*_{\mathrm{cost}}) \ge \mathcal{P}(f^*_{\mathrm{pred}}).
    \]
    
    \textbf{Step 2: Approximation guarantee.}
    We leverage the approximation guarantee established in \cref{thm:general-optimization-gap}. For any instance \(\pi\) and prediction \(\hat{\x}\), the solution \(\z_f(\pi)\) satisfies:
    \[
        p(\z_f(\pi)) \ge p(\x^*(\pi)) - C_1 n^{d-1/2}\sqrt{\|\hat{\x} - \x^*(\pi)\|_1} - C_2 n^{d-1/2}\sqrt{\log n},
    \]
    where \(C_1, C_2\) are problem-dependent constants.
    Taking the expectation over \(\pi \sim \mathcal{D}\) with \(f = f^*_{\mathrm{pred}}\) yields:
    \[
        \mathcal{P}(f^*_{\mathrm{pred}}) \ge \mathcal{P}^* - C_1 n^{d-1/2} \mathbb{E}\left[\sqrt{\|f^*_{\mathrm{pred}}(\pi) - \x^*(\pi)\|_1}\right] - \widetilde{O}(n^{d-1/2}).
    \]
    Since the square root function is concave, Jensen's inequality implies \(\mathbb{E}[\sqrt{X}] \le \sqrt{\mathbb{E}[X]}\). Thus,
    \[
        \mathcal{P}(f^*_{\mathrm{pred}}) \ge \mathcal{P}^* - O\left(n^{d-1/2} \sqrt{\varepsilon_{\mathcal{F}}}\right) - \widetilde{O}(n^{d-1/2}).
    \]
    The term involving the prediction error is the dominant factor that the learning algorithm seeks to optimize.
    Simplifying the expression using \(\widetilde{O}\) notation to capture the leading order dependencies completes the proof.
\end{proof}

\subsection{Proof of \cref{thm:erm-learnability}}
\RestateThmErmLearnability*
\begin{proof}
    Let \(f^*_{\mathrm{cost}}\) be the optimal oracle in \(\mathcal{F}\) that minimizes the expected cost.
    By \cref{coro:erm-learning}, the ERM algorithm \((2\epsilon, \delta)\)-learns \(f^*_{\mathrm{cost}}\).
    This implies that with probability at least \(1-\delta\), the excess risk satisfies \(\textrm{error}(\mathcal{A}_{\hat{f}}; \mathcal{D}) \le 2\epsilon\).
    Substituting the definition of error:
    \[
        \mathbb{E}[\mathtt{COST}(\mathcal{A}_{\hat{f}})] - \mathbb{E}[\mathtt{COST}(\mathcal{A}_{f^*_{\mathrm{cost}}})] \le 2\epsilon.
    \]
    Recalling that \(\mathtt{COST}(\mathcal{A}_f, \pi) = H - p(\z_f(\pi))\), this inequality translates to the objective value:
    \[
        \mathcal{P}(f^*_{\mathrm{cost}}) - \mathcal{P}(\hat{f}) \le 2\epsilon \implies \mathcal{P}(\hat{f}) \ge \mathcal{P}(f^*_{\mathrm{cost}}) - 2\epsilon.
    \]
    We now invoke \cref{prop:optimality-transfer} to lower bound the performance of the cost-minimizing oracle:
    \[
        \mathcal{P}(f^*_{\mathrm{cost}}) \ge \mathcal{P}^* - \widetilde{O}\left(n^{d-1/2}\sqrt{\varepsilon_{\mathcal{F}}}\right).
    \]
    Combining these two inequalities yields the stated result.
\end{proof}

\clearpage
\section{Generalization to Constrained Optimization}\label{sec:extend-constraints}

In this section, we extend the scope of our investigation from unconstrained optimization to a more general class of problems involving polynomial constraints. This generalization allows us to model complex scenarios where decision variables are subject to structural or resource limitations. Formally, we address the following constrained integer programming problem:
\begin{equation}\tag{d-IP'}\label{optimization: d-IP-extended}
  \begin{aligned}
    \displaystyle\max_{\x} \quad& p(\x)\\
    \text{s.t.} \quad& p_{c}(\x) \;\in\; [L_c, U_c]\quad \forall c \in \mathcal{C}\\
    & \x \in \{0,1\}^n,
  \end{aligned}
\end{equation}
where the objective function \(p(\x)\) and each constraint function \(p_c(\x)\) (indexed by \(c \in \mathcal{C}\)) are \(n\)-variate polynomials of degree at most \(d\) that satisfy the \(\beta\)-smoothness property.

To tackle this problem, we generalize the oracle-guided learning-augmented framework developed in \cref{sec: general-case}. The central tenet of our approach is to linearize both the objective function and the polynomial constraints by leveraging the oracle's prediction \(\hat{\x}\).
For each constraint polynomial \(p_c(\x)\), we construct a hierarchical linear approximation analogous to that of the objective function. Let \(\mathcal{I}_c\) denote the set of all valid index tuples derived from the recursive decomposition of \(p_c\), and let \(q_{c,I}(\x)\) represent the linear approximation of the component \(p_{c,I}(\x)\) centered at \(\hat{\x}\).

We formulate the linear programming relaxation, denoted as \cref{optimization: d-LP-extended}, by enforcing linearization guarantees for both the objective and the constraints. Crucially, we incorporate tolerance bounds \(\delta_I\) and \(\delta_{c,I}\) to account for the deviation between the oracle prediction and the optimal solution.
\begin{equation}\tag{d-LP'}\label{optimization: d-LP-extended}
  \begin{aligned}
    \displaystyle\max_{\x} \quad& q(\x) \\
    \text{s.t.} \quad& q_{c}(\x) \in [L_c - \delta_c, U_c + \delta_c] \quad \forall c \in \mathcal{C} \\
    & q_I(\x) \in p_I(\hat{\x}) \pm \delta_I \quad \forall I \in \mathcal{I} \\
    & q_{c,I}(\x) \in p_{c,I}(\hat{\x}) \pm \delta_{c,I} \quad \forall c \in \mathcal{C}, \forall I \in \mathcal{I}_c \\
    & \x \in [0,1]^n.
  \end{aligned}
\end{equation}
Here, \(q(\x)\) and \(q_c(\x)\) are the top-level linear approximations of \(p(\x)\) and \(p_c(\x)\), respectively. The tolerances \(\delta_I\) and \(\delta_{c,I}\) are determined by the oracle's accuracy \(\varepsilon = \|\x^* - \hat{\x}\|_1\) and the polynomial structure, as defined in \cref{sec: oracle-guided-relaxation-general}. Specifically, for the top-level constraint approximation, we set \(\delta_c = \sum_{I \in \mathcal{I}_c} \delta_{c,I}\).

A fundamental requirement for the validity of this relaxation is that it must admit the optimal integer solution \(\x^*\). We establish this feasibility in the following lemma.

\begin{lemma}\label{thm:feasibility-dLP-extended}
    Let \(\x^*\) be the optimal solution to \cref{optimization: d-IP-extended}. Then, \(\x^*\) is a feasible solution to the relaxation \cref{optimization: d-LP-extended}.
\end{lemma}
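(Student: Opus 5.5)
The plan is to verify the three families of constraints in \cref{optimization: d-LP-extended} one at a time, using only that \(\x^*\) is feasible for \cref{optimization: d-IP-extended}. The box constraint \(\x^*\in[0,1]^n\) is immediate because \(\x^*\in\{0,1\}^n\).

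\textbf{Component constraints.} For the objective components \(I\in\mathcal{I}\), the tolerances are exactly those of \cref{sec: oracle-guided-relaxation-general}. There \(\delta_I\) was chosen so that \(|q_I(\x^*)-p_I(\hat{\x})|\le\delta_I\). This is because
\[
q_I(\x^*)-p_I(\hat{\x})=\sum_{j\in[n]}(x^*_j-\hat{x}_j)\,p_{I,j}(\hat{\x}).
\]
The coefficient bound on \(p_{I,j}(\hat{\x})\) then comes from smoothness (for linear levels) or from \cref{lemma: polynomial-bound-arora} (for higher levels). Combined with \(\|\x^*-\hat{\x}\|_1=\varepsilon\) and \(|x^*_j-\hat{x}_j|\in\{0,1\}\), this yields the \(\sqrt{n\varepsilon}\)-type bounds. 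The argument uses nothing about \(p\) other than its \(\beta\)-smoothness and decomposition. It therefore applies verbatim to each constraint polynomial \(p_c\) and each \(I\in\mathcal{I}_c\), giving \(q_{c,I}(\x^*)\in p_{c,I}(\hat{\x})\pm\delta_{c,I}\).

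\textbf{Top-level constraints (main step).} Here I must show \(q_c(\x^*)\in[L_c-\delta_c,U_c+\delta_c]\). I know \(p_c(\x^*)\in[L_c,U_c]\), so it suffices to prove
\[
|p_c(\x^*)-q_c(\x^*)|\le\sum_{I\in\mathcal{I}_c}\delta_{c,I}=\delta_c.
\]
I will reuse the recursion from the proof of \cref{thm:relaxation-gap}. At a node \(I\), for \(\x\in[0,1]^n\),
\[
\textsc{gap}(p_{c,I},q_{c,I})(\x)\le\sum_j\textsc{gap}(p_{c,I,j},q_{c,I,j})(\x)+\sum_j|q_{c,I,j}(\x)-p_{c,I,j}(\hat{\x})|.
\]
In that proof the last term was controlled by the LP constraints. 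Here I instead evaluate at \(\x=\x^*\) and control it by the component feasibility just established, i.e. \(|q_{c,I,j}(\x^*)-p_{c,I,j}(\hat{\x})|\le\delta_{c,I,j}\). At the leaves \(p_{c,I}\) is linear, so \(q_{c,I}=p_{c,I}\) and the gap vanishes. Unrolling the recursion down the decomposition tree then gives \(\textsc{gap}(p_c,q_c)(\x^*)\le\sum_{I\in\mathcal{I}_c}\delta_{c,I}\). Adding this to \(p_c(\x^*)\in[L_c,U_c]\) gives the top-level constraint.

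The main obstacle is the bookkeeping in this unrolling. I need each index tuple's tolerance to be counted exactly once, so that the aggregate matches the definition \(\delta_c=\sum_{I\in\mathcal{I}_c}\delta_{c,I}\). I also need to handle the \(\sum_j x_j(\cdot)\) weighting correctly. Since \(x^*_j\in\{0,1\}\subseteq[0,1]\), each weighted term is dominated by its unweighted counterpart, so this works cleanly. Carrying out the induction from the deepest level upward should close the argument.
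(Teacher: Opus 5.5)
Your proposal is correct and follows the paper's route. Both arguments get the component constraints at \(\x^*\) from the definition of the tolerances, bound \(|p_c(\x^*)-q_c(\x^*)|\) by \(\sum_{I\in\mathcal{I}_c}\delta_{c,I}=\delta_c\), and then invoke \(p_c(\x^*)\in[L_c,U_c]\). The paper writes out only the case \(d=2\), where your recursion collapses to the single step \(q_c(\x^*)=\sum_i x^*_i\, p_{c,i}(\hat{\x})\in p_c(\x^*)\pm\delta_c\); your unrolling of the \textsc{gap} recursion at \(\x^*\), with component feasibility in place of the LP constraints, is exactly the general-degree argument the paper calls ``straightforward.''
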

\begin{proof}
    For clarity of exposition, we elucidate the proof using the quadratic case (\(d = 2\)) for a single constraint \(L \leq p_c(\x) \leq U\). The argument generalizes straightforwardly to higher-degree polynomials.
    Assume, without loss of generality, that the constant term of \(p_c(\x)\) is zero. The polynomial admits the decomposition \(p_c(\x) = \sum_{i\in [n]} x_i \cdot p_{c,i}(\x)\).
    The corresponding constraints in the relaxation are:
    \begin{equation}\label{extended-constraints-quadratic}
        \begin{aligned}
            L - \delta_c \leq q_c(\x) \leq U + \delta_c,\\
            p_{c,i}(\x) \in p_{c,i}(\hat{\x}) \pm \delta_{c,i},
        \end{aligned}
    \end{equation}
    where \(\delta_{c,i} = \beta \sqrt{n} \sqrt{\varepsilon}\) and \(\delta_c = \sum_{i\in [n]} \delta_{c,i}\). Note that for \(d=2\), the components \(p_{c,i}(\cdot)\) are linear, so \(q_{c,i}(\cdot) \equiv p_{c,i}(\cdot)\).

    Substituting \(\x^*\) into the component constraints, we observe that \(p_{c,i}(\x^*) \in p_{c,i}(\hat{\x}) \pm \delta_{c,i}\) holds by the definition of \(\delta_{c,i}\) (cf. \cref{lemma: quadratic-tolerance-oracle}).
    For the top-level constraint \(q_c(\x^*)\), we have:
    \begin{equation}
        \begin{aligned}
            q_c(\x^*) &= \sum_{i\in [n]} x^*_i \cdot q_{c,i}(\hat{\x}) \\
            &\in \sum_{i\in[n]} x^*_i \cdot \left[p_{c,i}(\x^*) \;\pm\; \delta_{c,i}\right]\\
            &\in p_c(\x^*) \;\pm\; \delta_c \subseteq [L - \delta_c, U + \delta_c].
        \end{aligned}
    \end{equation}
    The final inclusion follows from the original feasibility \(p_c(\x^*) \in [L, U]\). Thus, \(\x^*\) satisfies all relaxed constraints.
\end{proof}

We solve \cref{optimization: d-LP-extended} to obtain a fractional solution \(\y\), and subsequently employ independent randomized rounding to recover an integral solution \(\z\).
We formalize the notion of approximate constraint satisfaction as follows.

\begin{definition}[\cite{Arora1999PTAS}]
    A solution \(\x\) satisfies a constraint \( L \leq p_c(\x) \leq U \) with an additive error \(\delta\) if \(L - \delta \leq p_c(\x) \leq U + \delta\).
\end{definition}

The following theorem characterizes the quality of the rounded solution \(\z\).

\begin{theorem}\label{thm:extended-problem-guarantee}
    The proposed algorithm yields a rounded solution \(\z \in \{0,1\}^n\) for \cref{optimization: d-IP-extended} in polynomial time. With probability at least \(1 - 2(|\mathcal{C}| + 1) d/n^{k - d + 2}\), the objective value satisfies:
    \[      
        p(\z) \ge p(\x^*) - 2\eta \beta n^{d - 1/2} \sqrt{\varepsilon} - \eta \beta n^{d-1} \sqrt{\frac{k+1}{2}} \sqrt{n\ln n},
    \]
    where \(\x^*\) is the optimal solution of \cref{optimization: d-IP-extended}, and \(\eta = 2e(d - 2) + 1\) is a constant respect to \(d\).
    Furthermore, for each constraint \(c \in \mathcal{C}\) of degree \(d' \geq 2\), the solution \(\z\) satisfies the constraint within an additive error of:
    \[
        \Delta = \eta' \beta n^{d' - 1/2} \sqrt{\varepsilon} + \eta' \beta n^{d' -1} \sqrt{\frac{k+1}{2}} \sqrt{n\ln n},
    \]
    where \(\eta' = 2e(d' - 2) + 1\) is a constant respect to \(d'\).
\end{theorem}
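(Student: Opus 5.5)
The plan is to reuse the unconstrained analysis of \cref{thm:general-optimization-gap} almost verbatim. The only new ingredient is that the constraint polynomials \(p_c\) are handled by the same two-step argument: a relaxation gap followed by a rounding error.

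\textbf{Step 1: polynomial time.} The relaxation \cref{optimization: d-LP-extended} has polynomially many variables and linear constraints: one for each \(I\in\mathcal{I}\), and one for each \(c\in\mathcal{C}\) and \(I\in\mathcal{I}_c\), all produced by \textsc{Decompose}. We enumerate \(\varepsilon\in\{0,\dots,n\}\) as in \cref{algorithm: pipeline}, solve each LP, and round in linear time. By \cref{thm:feasibility-dLP-extended}, the LP built with the true value of \(\varepsilon\) is feasible, since it contains \(\x^*\). All subsequent claims are made for that iteration.

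\textbf{Step 2: objective.} Let \(\y\) be the LP optimum. I would repeat the proof of \cref{thm:relaxation-gap}:
\[
p(\y)\ge q(\y)-\textsc{gap}(p,q)(\y)\ge q(\x^*)-\textsc{gap}(p,q)(\y)\ge p(\x^*)-\textsc{gap}(p,q)(\y)-\textsc{gap}(p,q)(\x^*).
\]
The second inequality uses optimality of \(\y\) together with feasibility of \(\x^*\). The recursion over the decomposition tree then bounds each gap by \(\sum_{I}\delta_I\le \eta\beta n^{d-1/2}\sqrt{\varepsilon}\). This recursion only uses the constraints \(q_{I}(\x)\in p_I(\hat\x)\pm\delta_I\), which are still present in \cref{optimization: d-LP-extended}, so it applies unchanged. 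The rounding step is handled by \cref{thm: randomized rounding lemma with McDiarmid} applied to \(p\), which gives \(|p(\z)-p(\y)|\le\eta\beta n^{d-1}\sqrt{(k+1)/2}\sqrt{n\ln n}\) with failure probability at most \(2d/n^{k-d+2}\).

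\textbf{Step 3: constraints.} Fix \(c\in\mathcal{C}\) of degree \(d'\). Because \(\y\) satisfies the component constraints \(q_{c,I}(\y)\in p_{c,I}(\hat\x)\pm\delta_{c,I}\), the same tree recursion gives
\[
|p_c(\y)-q_c(\y)|\le\sum_{I\in\mathcal{I}_c}\delta_{c,I}\le\eta'\beta n^{d'-1/2}\sqrt{\varepsilon}.
\]
Combining this with \(q_c(\y)\in[L_c-\delta_c,U_c+\delta_c]\) places \(p_c(\y)\) within \(O(\eta'\beta n^{d'-1/2}\sqrt\varepsilon)\) of \([L_c,U_c]\). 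Applying \cref{thm: randomized rounding lemma with McDiarmid} to \(p_c\) then transfers this to \(\z\), at the cost of the rounding term \(\eta'\beta n^{d'-1}\sqrt{(k+1)/2}\sqrt{n\ln n}\), again with failure probability at most \(2d/n^{k-d+2}\) since \(d'\le d\). A union bound over the objective and the \(|\mathcal{C}|\) constraints yields the stated overall probability \(1-2(|\mathcal{C}|+1)d/n^{k-d+2}\).

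\textbf{Main obstacle: constants in Step 3.} The naive combination gives slack \(\delta_c+\sum_I\delta_{c,I}=2\sum_I\delta_{c,I}\), which is a factor \(2\eta'\) rather than the \(\eta'\) in the statement. To remove the factor of two, I would first try to redo the feasibility computation of \cref{thm:feasibility-dLP-extended} more carefully. The goal is to bound \(p_c(\y)\) directly by the interval \([L_c,U_c]\) widened by only one copy of \(\sum_I\delta_{c,I}\), by comparing \(q_c\) at \(\y\) against its anchored value \(p_c(\hat\x)\)-based terms rather than applying the triangle inequality twice. If that does not work, I would fall back to stating the constant as \(2\eta'\), which is harmless at the level of \(O(\cdot)\), and note the discrepancy.
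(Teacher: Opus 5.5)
Your proposal follows the paper's proof essentially step for step. You establish feasibility of $\x^*$ in the relaxation (d-LP'), bound the gap for $p$ and for each $p_c$ by the tree recursion, apply the randomized-rounding theorem to $p$ and to every $p_c$ (with $d'\le d$ controlling the failure probability), and take a union bound over the $|\mathcal{C}|+1$ polynomials. So the approach is the same.

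Your factor-of-two concern is genuine, and the paper does not resolve it. Its proof bounds $|p_c(\y)-q_c(\y)|\le\sum_{I\in\mathcal{I}_c}\delta_{c,I}$ and then declares the constraint violation of $p_c(\y)$ to be that amount. This ignores that $q_c(\y)$ is itself only known to lie in the widened interval $[L_c-\delta_c,\,U_c+\delta_c]$.

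Nothing in (d-LP') ties the slack $\delta_c$ (spent to admit $\x^*$) to the separate loss incurred passing from $q_c(\y)$ to $p_c(\y)$, so the two simply add. Your first idea is therefore unlikely to help. The fallback constant $2\eta'$ is what this argument actually delivers. It also matches the $2\eta$ in the objective bound of the same theorem, which arises from the same two-gap accounting.
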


\begin{proof}
    We analyze the objective value and constraint violations by synthesizing the relaxation gap analysis with rounding error bounds. 
    Without loss of generality, we assume throughout this proof that all polynomials are of degree \(d\).
    
    \paragraph{Feasibility and Relaxation Gap.}
    According to \cref{thm:feasibility-dLP-extended}, the optimal integer solution \(\x^*\) is feasible for \cref{optimization: d-LP-extended}.
    Let \(\y\) denote the optimal fractional solution to \cref{optimization: d-LP-extended}. Following the derivation presented in \cref{thm:relaxation-gap}, the objective value of the relaxation satisfies:
    \[
        p(\y) \ge p(\x^*) - 2\eta \beta n^{d - 1/2} \sqrt{\varepsilon}.
    \]
    Similarly, for each constraint \(c \in \mathcal{C}\), since \(\y\) satisfies \(q_c(\y) \in [L_c - \delta_c, U_c + \delta_c]\), we can bound the deviation of \(p_c(\y)\) from \(q_c(\y)\). Applying the relaxation gap bound to \(p_c(\y)\) yields:
    \[
        |p_c(\y) - q_c(\y)| \le \sum_{I \in \mathcal{I}_c} \delta_{c,I} \le \eta \beta n^{d - 1/2} \sqrt{\varepsilon}.
    \]
    Consequently, \(p_c(\y)\) satisfies the constraint within an additive error of \(\eta \beta n^{d - 1/2} \sqrt{\varepsilon}\).

    \paragraph{Rounding Error.}
    Invoking \cref{thm: randomized rounding lemma with McDiarmid} for the objective polynomial \(p(\x)\), the following bound holds with probability at least \(1 - 2d/n^{k - d + 2}\):
    \[
        |p(\z) - p(\y)| \le \eta \beta n^{d-1} \sqrt{\frac{k+1}{2}} \sqrt{n\ln n}.
    \]
    An analogous probabilistic bound applies to each constraint polynomial \(p_c(\x)\). By applying a union bound over the objective function and all \(|\mathcal{C}|\) constraints, we establish that these bounds hold simultaneously with probability at least \(1 - 2(|\mathcal{C}| + 1)d/n^{k - d + 2}\).

    \paragraph{Conclusion.}
    Combining the relaxation gap and the rounding error yields the final performance guarantees.
    For the objective function, we have:
    \[
        p(\z) \ge p(\y) - |p(\z) - p(\y)| \ge p(\x^*) - 2\eta \beta n^{d - 1/2} \sqrt{\varepsilon} - \eta \beta n^{d-1} \sqrt{\frac{k+1}{2}} \sqrt{n\ln n}.
    \]
    Regarding the constraints, the total additive error is given by the sum of the relaxation deviation and the rounding deviation:
    \[
        \Delta = \eta \beta n^{d - 1/2} \sqrt{\varepsilon} + \eta \beta n^{d-1} \sqrt{\frac{k+1}{2}} \sqrt{n\ln n}.
    \]
\end{proof}

\clearpage
\section{Application to MAX-\textit{k}-CSP}\label{sec:maxcsp-smoothness}
In this section, we demonstrate the applicability of our framework to the \MAXKCSP{} problem.
We show that \MAXKCSP{} can be naturally formulated as a smooth integer program, thereby allowing our learning-augmented algorithms to be effectively applied.

It is a well-established result that problems in \MAXSNP{} can be reduced, via L-reductions, to \MAXKCSP{} for some constant \(k\)~\citep{Papadimitriou1991APX, Papadimitriou1994Computational}, while preserving the approximation ratio.
Although specific graph problems such as \MAXCUT{} admit natural low-degree polynomial formulations (as illustrated in \cref{example: maxcut}), the reduction to \MAXKCSP{} offers a unified perspective.
Therefore, we focus on modeling \MAXKCSP{} within our framework.

A standard arithmetization technique allows us to represent the \MAXKCSP{} problem as a smooth integer program of degree \(k\).
Consider an instance with \(n\) decision variables \(x_1, \ldots, x_n \in \{0,1\}\) and \(m\) constraints.
Each constraint \(C_j\) (for \(j \in [m]\)) is defined over a subset of \(k\) variables and is specified by a Boolean function \(c_j : \{0,1\}^k \to \{0,1\}\).

To formulate the objective function, we encode each constraint as a polynomial.
For a specific constraint \(c\) on variables \(\x_S = (x_{i_1}, \dots, x_{i_k})\), we define its polynomial representation \(t_c(\x_S)\) as the sum of indicator polynomials for all its satisfying assignments.
Specifically, for each assignment \(a \in \{0,1\}^k\) such that \(c(a) = 1\), the indicator polynomial is:
\[
  \delta_a(\x_S) = \prod_{r=1}^k x_{i_r}^{a_r} (1 - x_{i_r})^{1 - a_r}.
\]
Then, \(t_c(\x_S) = \sum_{a : c(a) = 1} \delta_a(\x_S)\).
The global objective function \(p(\x)\) is the sum over all \(m\) constraints:
\[
  p(\x) = \sum_{j=1}^m t_{c_j}(\x).
\]
This function is a polynomial of degree at most \(k\).

To analyze the smoothness of \(p(\x)\), we introduce a parameter \(M\) that captures the ``local density'' of the constraints.
We define \(M\) as the maximum number of constraints defined on any single set of \(k\) variables.
Intuitively, \(M\) measures the multiplicity of constraints sharing the same variable scope.
\begin{itemize}
    \item For \MAXCUT{} on simple graphs, any pair of vertices has at most one edge, so \(M=1\).
    \item For \MAXKSAT{} you might have multiple clauses involving the same set of variables (e.g., $x_1 \lor x_2 \lor x_3$ and $\neg x_1 \lor x_2 \lor x_3$). Since there are $2^k$ possible distinct clauses over $k$ variables, if we assume no duplicate identical clauses, $M \le 2^k$ .
\end{itemize}

We now establish that the objective function \(p(\x)\) is smooth, with the smoothness parameter \(\beta\) depending on \(M\) and \(k\).

\begin{theorem}\label{thm:maxkcsp-smoothness}
    The polynomial objective \(p(\x)\) derived from a \MAXKCSP{} instance is \(\beta\)-smooth for any \(\beta \ge M 2^k\).
\end{theorem}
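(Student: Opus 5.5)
The plan is to verify \cref{def:smooth-polynomial} directly, with $d=k$. We bound the coefficient of an arbitrary monomial of $p(\x)=\sum_{j=1}^m t_{c_j}(\x)$ in two stages: first the contribution of a single constraint polynomial $t_c$, then the number of constraints that can contribute to a given monomial. We assume, as is standard, that the $k$ variables in each scope are distinct. Then every $t_c$ is multilinear, and every monomial is $x_T:=\prod_{i\in T}x_i$ for a set $T\subseteq[n]$ with $|T|=l\le k$.

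\emph{Step 1: coefficients of a single constraint.} Fix a constraint $c$ with scope $S$, $|S|=k$, and a satisfying assignment $a$. Write $A=\{r: a_r=1\}$ (identified with the corresponding indices in $S$). Expanding
\[
\delta_a(\x_S)=\prod_{r\in A}x_{i_r}\prod_{r\notin A}(1-x_{i_r})
\]
shows that the coefficient of $x_T$ is $(-1)^{|T\setminus A|}$ if $A\subseteq T\subseteq S$, and $0$ otherwise. Summing over the satisfying assignments $a$, the coefficient of $x_T$ in $t_c$ has absolute value at most the number of sets $A\subseteq T$. That number is $2^{|T|}\le 2^k$. Moreover, this coefficient is nonzero only if $T\subseteq S$.

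\emph{Step 2: counting contributing constraints.} By linearity, the coefficient of $x_T$ in $p$ is the sum of the coefficients of $x_T$ in the $t_{c_j}$. Only constraints whose scope $S$ contains $T$ contribute. The number of $k$-subsets $S\supseteq T$ is $\binom{n-l}{k-l}\le n^{k-l}$, and by the definition of $M$ each such scope carries at most $M$ constraints. Hence
\[
|\mathrm{coef}(x_T)|\le M\,2^k\,n^{k-l}\le \beta n^{k-l}
\]
for every $\beta\ge M2^k$. This covers every $0\le l\le k$. The case $l=0$ (the constant term) is included: all $m\le M\binom{n}{k}\le Mn^k$ constraints contribute, each with constant coefficient of absolute value at most $1$.

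The main point needing care is Step 2, namely realizing that smoothness comes from the local density $M$ together with the count of supersets of $T$. A second, bookkeeping issue is that $p$ may have actual degree below $k$ because of cancellations. I would handle this by reading \cref{def:smooth-polynomial} with the nominal degree $d=k$. This is legitimate because the bound $\beta n^{k-l}$ is only weaker for smaller actual degree, and the downstream results depend only on these coefficient bounds. The bound in Step 1 could be sharpened to $2^{l}$, but $2^k$ suffices.
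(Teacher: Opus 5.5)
Your proof is correct and follows the paper's argument: each constraint contributes at most $2^k$ in absolute value to any monomial coefficient, only the $\binom{n-l}{k-l}\le n^{k-l}$ scopes containing the monomial's variable set can contribute, and each such scope carries at most $M$ constraints. Your version differs only in minor refinements: it merges the paper's two cases ($l=k$ and $l<k$) into one, sharpens the per-constraint bound to $2^{|T|}$, and makes explicit the nominal-degree convention $d=k$ that the paper uses tacitly.
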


\begin{proof}
    Let $p(\x) = \sum_{j=1}^m t_{c_j}(\x)$. We analyze the coefficients of the monomials in the expansion of $p(\x)$ by bounding the contribution from each constraint.
    
    First, observe that for any single constraint \(c\) on a set of \(k\) variables, the polynomial \(t_c(\x)\) is a sum of at most \(2^k\) terms of the form \(\delta_a(\x)\).
    The expansion of each \(\delta_a(\x)\) produces monomials with coefficients \(\pm 1\).
    Consequently, the magnitude of the coefficient of any monomial in the expansion of a single constraint \(t_c(\x)\) is bounded by \(2^k\).

    We now bound the coefficients of the global objective $p(\x)$ by considering two cases based on the degree $l$ of the monomials.

    \begin{itemize}
        \item \emph{Case 1: Monomials of degree $l = k$.}
        Consider a monomial involving a specific set of $k$ variables, say $S$. This monomial can only appear in the polynomials $t_{c_j}(\x)$ corresponding to constraints defined exactly on the variable set $S$. 
        By the definition of \(M\), there are at most $M$ such constraints.
        Since the coefficient of the monomial in each such $t_{c_j}(\x)$ is bounded by $2^k$, the magnitude of the coefficient for this monomial in $p(\x)$ is bounded by $M 2^k$.
        This satisfies the $\beta$-smoothness condition $|c_S| \le \beta n^{k-k} = \beta$ for $\beta \ge M 2^k$.
        
        \item \emph{Case 2: Monomials of degree $l < k$.}
        Consider a monomial defined by a set $S$ of $l$ variables. This monomial contributes to the expansion of a constraint defined on a set $T$ of $k$ variables only if $S \subset T$. 
        To form such a superset $T$, we must select $k-l$ additional variables from the remaining $n-l$ variables. Thus, the number of such supersets $T$ is $\binom{n-l}{k-l}$. 
        For each set $T$, there are at most $M$ constraints.
        Using the bound of \(2^k\) for the coefficient from each constraint, the coefficient of the monomial corresponding to $S$ in \(p(\x)\) is bounded by:
        \[
            M 2^k \cdot \binom{n-l}{k-l} \le M 2^k \cdot \frac{n^{k-l}}{(k-l)!} \le M 2^k n^{k-l}.
        \]
        This satisfies the $\beta$-smoothness requirement $|c_S| \le \beta n^{k-l}$ provided that $\beta \ge M 2^k$.
    \end{itemize}

    Combining both cases, we conclude that $p(\x)$ is $\beta$-smooth for any $\beta \ge M 2^k$.
\end{proof}
\end{document}